\newcommand {\abs}[1]{\left\vert#1\right\vert}
\newcommand {\set}[1]{\left\{#1\right\}}
\newcommand {\defined} {\stackrel{def} {=}}
\definecolor{red}{rgb}{1,0,0}
\definecolor{blue}{rgb}{0,0,1}
\definecolor{yellow}{cmyk}{0,0,1,0}
\newcommand{\runningtitle}[1]{\vspace{0.5ex}\noindent{{\small \textbf{\boldmath #1}}}}
\newcommand{\comment}[1]{}
\newcommand{\prb}{\textsc{Rlp}}
\newcommand{\prbmax}{\textsc{Pmp}}
\newcommand {\pp}   {{\cal P}}
\newcommand {\instrlp} {(G, \pp, d)}
\newcommand {\cost} {\textit{cost}}
\newcommand {\reg} {\textit{reg}}
\newcommand{\alg}{\textsc{Alg}}
\newcommand{\nominator}{\log (\abs{E}/d) \cdot \log d}
\newcommand{\lb}{\left(\frac{\nominator}{\log (\nominator)}\right)}
\begin{document}

\title{Online Regenerator Placement\thanks{This work was supported in part by the Israel Science Foundation grant No. 1249/08 and British Council Grant UKTELHAI09.} \thanks{A preliminary version of this paper appeared in the \emph{15th  International Conference on Principles of Distributed Systems (OPODIS), Toulouse, France, December 2011, pp.~4--17.}}}
\titlerunning{Online Regenerator Placement}

\author{George B. Mertzios~\inst{1}, Mordechai Shalom\thanks{Corresponding Author: Mendel Singer 15/6, 34984, HAIFA-ISRAEL, Tel: +972(54)753 6550}~\inst{2},\\
Prudence W.H. Wong~\inst{3}, Shmuel Zaks~\inst{4}}
\authorrunning{G. B. Mertzios, M. Shalom, P.W.H. Wong and S. Zaks}

\institute{
School of Engineering and Computing Sciences, Durham University, UK
\email{george.mertzios@durham.ac.uk}
\and
TelHai College, Upper Galilee, 12210, Israel. \email{cmshalom@telhai.ac.il}
\and
Department of Computer Science, University of Liverpool, Liverpool, UK.
\email{pwong@liverpool.ac.uk}
\and
Department of Computer Science, Technion, Haifa, Israel.
\email{zaks@cs.technion.ac.il}
}

\maketitle
\begin{abstract}
Connections between nodes in optical networks are realized by lightpaths.
Due to the decay of the signal, a regenerator has to be placed on every lightpath
after at most $d$ hops, for some given positive integer $d$.
A regenerator can serve only one lightpath.
The placement of regenerators has become an active area of research during recent years,
and various optimization problems have been studied.
The first such problem is the Regeneration Location Problem ($\prb$),
where the goal is to place the regenerators so as to
minimize the total number of nodes containing them.
We consider two extreme cases of online $\prb$ regarding
the value of $d$ and the number $k$ of regenerators
that can be used in any single node.
(1) $d$ is arbitrary and $k$ unbounded. In this case a feasible solution always exists.
We show an $O(\log \abs{X} \cdot \log d)$-competitive randomized algorithm
for any network topology, where $X$ is the set of paths of length $d$.
The algorithm can be made deterministic in some cases.
We show a deterministic lower bound of $\Omega \lb$, where $E$ is the edge set.
(2) $d=2$ and $k=1$. In this case there is not necessarily a solution
for a given input. We distinguish between feasible inputs (for which there is a solution)
and infeasible ones.
In the latter case, the objective is to satisfy the maximum number of lightpaths.
For a path topology we show a lower bound of $\sqrt{l}/2$ for the competitive ratio
(where $l$ is the number of internal nodes of the longest lightpath)
on infeasible inputs,
and a tight bound of $3$ for the competitive ratio on feasible inputs.
\end{abstract}

\noindent\textbf{Keywords:} online algorithms, competitive ratio, optical networks, regenerators.\\

\setcounter{page}{0}
\newpage

\section{Introduction}\label{sec:intro}

\runningtitle{Background}
\label{sec:background}
Optical wavelength-division multiplexing (WDM) is the most promising technology today that enables us to deal with the enormous growth of traffic in communication networks, like the Internet. Optical fibers using WDM technology can carry around 80 wavelengths (colors) in real networks and up to few hundreds in testbeds. As satisfactory solutions have been found for various coloring problems, the focus of studies shifts from the number of colors to the hardware cost. These new measures provide better understanding for designing and routing in optical networks.

A communication between a pair of nodes is done via a \emph{lightpath}.
The energy of the signal along a lightpath decreases and thus amplifiers are used every fixed distance. Yet, as the amplifiers introduce noise into the signal there is a need to place a regenerator every at most $d$ hops.

There is a limit imposed by the technology on the number of regenerators that can be placed in a network node \cite{CLR10,FMMMZ11}. We denote this limit by $k$ and refer to the case where this limit is not likely to be reached by any regenerator placement as $k=\infty$.

\runningtitle{The problems}
Given a network $G$, a set of lightpaths in $G$, and integers $d$ and $k$, we need to place regenerators at the nodes of the network, such that a) for each lightpath there is a regenerator in at least one of each $d$ consecutive internal nodes, and b) at most $k$ regenerators are placed at any node. When $k=\infty$ we consider the \emph{regenerator location problem} (\prb) where the objective is to minimize the number of nodes that are assigned regenerators. When $k$ is bounded there are inputs for which there is no feasible regenerator placement that satisfies both conditions. For example, consider the case $d=2$ and $k=1$, and three identical lightpaths $u-v-w-x$. Each of   these lightpaths must have a regenerator either at $v$ or $w$, and this is clearly impossible. In this case we consider the \emph{Path Maximization Problem} (\prbmax) that seeks for regenerator placements that serve as many lightpaths as possible. We consider online algorithms (see~\cite{BE98}) for these problems.

\runningtitle{Online algorithms}
In the online setting the lightpaths are given one at a time, the algorithm has to decide on the locations of the regenerators and cannot change the decision later.  An algorithm is \emph{$c$-competitive} for $\prb$, for $c \geq 1$, if for every input the number of locations used is no more than $c$ times the locations used by an optimal offline algorithm. An online algorithm is \emph{$c$-competitive} for $\prbmax$, for $c\geq 1$, if the number of lightpaths that it satisfies is at least $1/c$ times the number of lightpaths that could be satisfied by an optimal offline algorithm.

\runningtitle{Related Work}
Placement of regenerators in optical networks has become an active area in recent years. Most of the researches have focused on the technological aspects of the problems. Moreover, heuristics and simulations heve been performed in order to reduce the number of regenerators are performed in  (e.g., ~\cite{CLR10,FGG+10a,KS01,PPK08,SGSG04,YR02,YR05}).
The regenerator location problem ($\prb$) was shown to be NP-complete in \cite{CLR10}, followed by heuristics and simulations. In \cite{FMMMZ11} theoretical results for the offline version of $\prb$ are presented. The authors study four variants of the problem, depending on whether the number $k$ of regenerators per node is bounded, and whether the routings of the requests are given. Regarding the complexity of the problem, they present polynomial-time algorithms and NP-completeness results for a variety of special cases.

We note that while considering the path topology, $\prb$ has implications for the following scheduling problem:
Assume a company has $n$ cars and that car $i$ needs to be serviced within
every at most $d$ days between day $a_i$ and $b_i$.
Furthermore, assume that the garage can serve at most $k$ cars per day
and charges a certain cost each time the garage is used.
The objective is to service the cars in the fewest number of days and hence minimizing the number of times the garage is used.

Other objective functions have also been considered in the context of regenerator placement.
E.g., in \cite{MSSZ12} the problem of minimizing the total number of regenerators
is studied under other settings.

\runningtitle{Our Contribution}
In this paper we study the online version of the regenerator location problem, and consider two extreme cases regarding the value of $d$ and the value $k$ of the number of regenerators that can be used in any single node.
\begin{itemize}
\item $\prb$: $k=\infty$, $G$ and $d$ are arbitrary (in this case there is a solution for every input, and the measurement is the number of locations in which regenerators are placed). We show:
\begin{itemize}
\item an $O(\log \abs{X} \cdot \log d)$-competitive randomized algorithm for any network topology, that can be made deterministic (with the same competitive ratio) for some cases including tree topology networks, where $X$ is the set of all paths of length $d$ in $G$.
\item a deterministic lower bound of $\Omega \lb$, where $E$ is the edge set of $G$.
\end{itemize}

\item $\prbmax$: $G$ is a path, $k=1$ and $d=2$ (in this case there is not necessarily a solution, and the measurement is the number of satisfied lightpaths). We distinguish between feasible inputs (for which there is a solution) and infeasible ones, on a path topology, and show:
\begin{itemize}
  \item a lower bound of $\sqrt{l}/2$ for the competitive ratio for general instances which may be infeasible (where $l$ is the number of internal nodes of the longest lightpath).
  \item a tight bound of $3$ for the competitive ratio of deterministic online algorithms for feasible instances.
\end{itemize}

\end{itemize}

\runningtitle{Organization of the paper}
In Section~\ref{sec:prelim} we present some preliminaries. In Section~\ref{sec:kinfty} we consider general topology and analyze the first extreme case ($k$ unbounded). In Section~\ref{sec:pathktwo} we analyze the other extreme case ($k=1$) for a path topology. In Section~\ref{sec:summary} we present further research directions.

\section{Preliminaries}\label{sec:prelim}

\noindent
Given an undirected underlying graph $G=(V,E)$ that corresponds to the network topology, a
\emph{lightpath} is a simple path in $G$. We are given a set $\pp=\set{P_1, P_2, ..., P_n}$ of simple paths in $G$ that represent the lightpaths. The \emph{length} of a lightpath is the number of edges it contains. The \emph{internal vertices} (resp. \emph{edges}) of
a path $P$ are the vertices (resp. edges) in $P$ except the first and the last ones.

A regenerator assignment is a function $\reg: V \times \pp \mapsto \set{0,1}$. For any $P \in \pp$ and any $v \in V(P)$, $\reg(v,P)=1$ if a regenerator is assigned to $P$ at node $v$. Moreover, $\reg(v,P)=1$ only if $v$ is an internal node of $P$. We denote by $\reg(v)$ the number of regenerators located at node $v$, i.e., $reg(v) = \sum_{P \in \pp}\reg(v,P)$ .
Denote by $\cost(\reg)$ the cost of the assignment $\reg$, measured by the total number of locations where regenerators have been placed. Let $R(\reg) = \{v \in V | \reg(v) \geq 1\}$, then $\cost(\reg) = |R(\reg) |$.

Given an integer $d$, a lightpath $P$ is \emph{$d$-satisfied} by the regenerator assignment $\reg$ if it does not contain $d$ consecutive internal vertices
without a regenerator,
in other words, for any $d$ consecutive internal vertices of $P$,
$v_1,v_2,\cdots,v_d$, $\sum_{i=1}^d \reg(v_i,P)\geq 1$.
A set of lightpaths is \emph{$d$-satisfied} if each of its lightpaths is $d$-satisfied. Note that a path with at most $d$ edges is $d$-satisfied regardless of $\reg$, therefore we assume without loss of generality that every path $P \in \pp$ has at least $d+1$ edges. For the sake of the analysis we assume, without loss of generality, that every edge of the graph is used by at least one path $P \in \pp$. We want to emphasize that this is not assumed by the online algorithms, (which would be a loss of generality).

The Regenerator Location Problem ($\prb$): given a graph $G=(V,E)$, a set $\pp$ of paths in $G$, a distance $d \geq 1$, determine the smallest number of nodes $R \subseteq V$ to place regenerators so that all the paths in $\pp$ are $d$-satisfied. Formally:

\begin{center}
\fbox{\begin{minipage}{11.cm}
\noindent  {\sc Regenerator Location Problem (\prb)\footnote{The offline version of this problem is denoted as \textsc{RPP}/$\infty$/+ in \cite{FMMMZ11}.}}\\\vspace{-.15cm}

{\bf Input:} An undirected graph $G=(V,E)$, a set $\pp$ of paths in $G$, $d \geq 1$

{\bf Output:} A regenerator assignment $\reg$ such that every path $P \in \pp$ is $d$-satisfied.

{\bf Objective:} Minimize $\cost(\reg)$.
\end{minipage}}
\medskip
\end{center}
Let $\reg^*$ denote an optimal regenerator assignment and
$\cost^*$ denote its cost $\cost(\reg^*)$. We consider the online version of the problem in which $G$ and $d$ are given in advance and the paths $\pp=\set{P_1, P_2, \ldots, P_n}$ arrive in an online manner, one at a time in this order. An online algorithm finds a regenerator assignment as the input arrives and once $\reg(v,P)$ is set to $1$ it cannot be reverted to $0$. An online algorithm~$\alg$ for $\prb$ is \emph{$c$-competitive}, for $c \geq 1$, if its cost is at most $c \cdot \cost^*$. Clearly, when $d=1$, $\cost(\reg)=\abs{V_I}$ for any regenerator assignment $\reg$ where $V_I$ is the set of nodes that are internal nodes of some lightpaths, therefore any algorithm is $1$-competitive. Hence we consider the case $d \geq 2$.

When the number $k$ of regenerators per node is finite, we study the Path Maximization Problem ($\prbmax$): given a graph $G=(V,E)$, a set $\pp$ of paths in $G$, a distance $d \geq 1$ and an integer $k \geq 1$, place regenerators so that the number of $d$-satisfied paths in $\pp$ is maximized. Formally:
\begin{center}
\fbox{\begin{minipage}{11.cm}
\noindent  {\sc Path Maximization Problem (\prbmax)}\\\vspace{-.15cm}

{\bf Input:} An undirected graph $G=(V,E)$, a set $\pp$ of paths in $G$, $d,k \geq 1$

{\bf Output:} A regenerator assignment $\reg$ for which $\reg(v) \leq k$ for every node $v \in V$.

{\bf Objective:} Maximize the number of $d$-satisfied paths in $\pp$.
\end{minipage}}
\medskip
\end{center}

An online algorithm~$\alg$ for $\prbmax$ is \emph{$c$-competitive}, for $c \geq 1$, if the number of paths it satisfies is at least $1/c$ times the number of paths satisfied by an optimal offline algorithm.

\section{The Regenerator Location Problem}\label{sec:kinfty}
In this section we consider the case where the technological limit imposed on the number of regenerators at a node is unlikely to be reached by any regenerator assignment. In this case we can assume without loss of generality that whenever there is a node $v$ and a path $P \in \pp_v$ with $\reg(v,P)=1$ then $\reg(v,P')=1$ for every other path $P' \in \pp_v$, because this does not affect $\cost(\reg)$. In other words for any given node $v$ and any two paths $P,P' \in \pp_v$ we assume $\reg(v,P)=\reg(v,P')$.

\subsection{Bounds for Path Topology}
In this section we consider path topologies and present bounds for the competitive ratio of both deterministic and randomized algorithms. Specifically, we prove a tight bound of $2$ for deterministic algorithms, a lower bound of $3/2$ and an upper bound of $2-1/d^2$ for randomized algorithms.

Throughout the section $V=\set{v_1, v_2, \ldots, v_n}$ is the node set of the path and $E=\set{\set{v_i,v_{i+1}}|1 \leq i < n}$ is its edge set. A \emph{region} of $\pp$ is a maximal set of consecutive vertices in the union of all the internal vertices of the paths of $\pp$. For a region $L$, we denote by $\cost^*_L$ the number of locations from $L$ used by an optimal solution, i.e. $\cost^*_L \defined \abs{\reg^* \cap L}$. We start with the following lower bound regarding an optimal solution.

\begin{proposition}\label{prop:OptPathRegion}
\[
\abs{L} \leq \cost^*_L \cdot (2d-1).
\]
\end{proposition}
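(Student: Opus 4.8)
The plan is to establish a covering statement: every vertex of the region $L$ lies within distance $d-1$ along the host path from some node at which $\reg^*$ places a regenerator, and moreover that regenerator node itself belongs to $L$. Granting this, $L$ is covered by the collection of intervals $\set{v_{a-d+1}, v_{a-d+2}, \ldots, v_{a+d-1}}$ taken over all regenerator locations $v_a \in L$; there are exactly $\cost^*_L$ such locations and each interval spans $2d-1$ vertices, so $\abs{L} \leq \cost^*_L \cdot (2d-1)$. (If $\cost^*_L = 0$ the covering statement forces $L = \emptyset$ and the bound is trivial.)

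First I would record a structural fact coming from the path topology together with the standing assumption that every $P \in \pp$ has at least $d+1$ edges: the internal vertices of a single path $P$ form a contiguous block $v_p, v_{p+1}, \ldots, v_q$ of at least $d$ vertices, and, being a run of consecutive vertices inside the union of all internal vertices of $\pp$, this whole block lies within one region. Consequently, if $P$ has one internal vertex in $L$ then all its internal vertices are in $L$, and in particular any node to which $\reg^*$ assigns a regenerator for $P$ is in $L$ (regenerators sit only on internal vertices).

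Next I would fix $v_c \in L$. By definition of a region, $v_c$ is internal to some $P \in \pp$, whose internal-vertex block $v_p, \ldots, v_q$ satisfies $p \leq c \leq q$ and $q - p \geq d-1$. The key sub-step is to choose an index $r$ with $\set{v_r, \ldots, v_{r+d-1}} \subseteq \set{v_p, \ldots, v_q}$ and $v_c \in \set{v_r, \ldots, v_{r+d-1}}$; the admissible $r$ are exactly those in $[\max(p,\, c-d+1),\, \min(q-d+1,\, c)]$, and this range is nonempty because the only nontrivial constraints it imposes are $p \leq c$, $c \leq q$, and $p \leq q-d+1$, all of which hold. Since $v_r, \ldots, v_{r+d-1}$ are $d$ consecutive internal vertices of $P$ and $\reg^*$ is $d$-satisfying, some $v_a$ in this window carries a regenerator of $\reg^*$; by the structural fact $v_a \in L$, and $\abs{a-c} \leq d-1$ because $v_c$ and $v_a$ lie in a common window of $d$ consecutive vertices. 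This is precisely the covering statement.

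I expect the only genuine obstacle to be the choice of the window $\set{v_r, \ldots, v_{r+d-1}}$: it must simultaneously sit entirely inside $P$'s internal vertices, so that $d$-satisfaction of $P$ applies to it, and contain $v_c$, so that the regenerator it produces is close to $v_c$. This dual requirement is exactly why the constant is $2d-1$ rather than $d$ — when $v_c$ is at an end of $P$'s internal block, the guaranteed regenerator can be forced as far as $d-1$ steps away on one side. The remaining steps (the structural fact and the concluding interval count) are routine.
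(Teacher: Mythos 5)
Your proof is correct and rests on the same key fact as the paper's: a node of $L$ is internal to some path with at least $d$ internal vertices, so some window of $d$ consecutive internal vertices of that path contains the node, and $d$-satisfaction forces a regenerator (necessarily in $L$) within distance $d-1$. You package this as a radius-$(d-1)$ covering of $L$ by intervals around the regenerator locations, whereas the paper packages it as a bound of $2d-1$ on the gap between consecutive regenerator locations plus a boundary condition at each end of $L$; the bookkeeping differs but the argument is essentially the same, and your explicit window-selection step is a careful rendering of the paper's brief ``consider a path crossing $u_d$'' contradiction.
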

\begin{proof}
In any solution the distance between two consecutive regenerator locations from $L$ is at most $2d-1$. Indeed, assume by way of contradiction, that there are $2d-1$ consecutive nodes $u_1,\ldots,u_d,\ldots,u_{2d-1}$ without a regenerator, and consider a path $P$ crossing $u_d$. As $P$ has at least $d$ internal nodes, $P$ crosses either one of $u_1, u_{2d-1}$. Then $P$ has $d$ consecutive nodes without regenerator, i.e. $P$ is not $d$-satisfied, a contradiction. By the same argument, any solution must have a regenerator in at least one of the leftmost (resp. rightmost) $d$ nodes of $L$.
\qed
\end{proof}

\begin{lemma}
\label{lem:two_comp_path}
There is a $2$-competitive deterministic online algorithm in path topologies for $\prb$.
\end{lemma}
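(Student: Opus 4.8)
The plan is to use a fixed ``grid'' algorithm and to charge its cost region by region against an optimal solution via Proposition~\ref{prop:OptPathRegion}. Fix once and for all the set $D=\set{v_j\in V : d\mid j}$ consisting of every $d$-th vertex of the path. The algorithm is: when a lightpath $P$ arrives, place a regenerator at every internal vertex of $P$ that belongs to $D$ (formally, set $\reg(v,P')=1$ for every such $v$ and every $P'\in\pp_v$). This is clearly deterministic and online, and it never assigns a regenerator to a vertex outside $V_I$.

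I would first check that the algorithm always produces a valid solution, i.e.\ that every arriving $P$ becomes, and remains, $d$-satisfied. The internal vertices of $P$ form a contiguous block $v_a,\dots,v_b$, and since we may restrict attention to paths with at least $d$ internal vertices, $b-a+1\ge d$. Any $d$ consecutive integers contain exactly one multiple of $d$, so every window of $d$ consecutive internal vertices of $P$ contains a vertex of $D$; this vertex lies in $[a,b]$ and hence carries a regenerator once $P$ has been processed. Regenerators are never removed, so $P$ stays $d$-satisfied.

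For competitiveness, let $L_1,\dots,L_t$ be the regions of the final instance $\pp$; they partition $V_I$, and every vertex of a region is internal to some lightpath. Consequently, at the end the algorithm has placed a regenerator at exactly the vertices of $D$ that lie in some region, and nowhere else, so $\cost(\alg)=\sum_i\abs{D\cap L_i}$. Since $\reg(v,P)=1$ is permitted only when $v$ is internal to $P$, any assignment places regenerators only within $V_I=\bigsqcup_i L_i$, whence $\cost^*=\sum_i\cost^*_{L_i}$. It therefore suffices to show, for each region $L$, that $\abs{D\cap L}\le 2\,\cost^*_L$. Among the $\abs{L}$ consecutive vertices of $L$ there are at most $\lceil\abs{L}/d\rceil$ vertices of $D$, while Proposition~\ref{prop:OptPathRegion} yields $\cost^*_L\ge\lceil\abs{L}/(2d-1)\rceil$; so the claim reduces to the elementary inequality $\lceil\ell/d\rceil\le 2\,\lceil\ell/(2d-1)\rceil$ for all integers $\ell\ge1$ and $d\ge2$, which one checks by writing $\ell=q(2d-1)+s$ with $1\le s\le 2d-1$. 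Summing over the regions gives $\cost(\alg)\le 2\,\cost^*$.

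I expect the only delicate point to be this last per-region estimate. The ``continuous'' ratio $(2d-1)/d$ is strictly below $2$, yet a single region of length $2d-1$ can force $\abs{D\cap L}=2$ while $\cost^*_L=1$; thus one must carry the integer (ceiling) bounds through the argument rather than the real ones, and this is precisely the phenomenon that makes the bound $2$ tight. Everything else is bookkeeping about how the regions partition $V_I$ and how the algorithm's placements coincide with $D\cap V_I$.
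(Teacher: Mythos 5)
Your proof is correct and takes essentially the same route as the paper: the identical fixed-grid algorithm (regenerators at every $d$-th vertex, restricted to internal vertices of arriving paths), the same region-by-region charging against the optimum via Proposition~\ref{prop:OptPathRegion}, with integrality closing the gap between the continuous ratio $(2d-1)/d$ and the bound $2$. The only cosmetic difference is how integrality is invoked at the end: you prove the ceiling inequality $\lceil \ell/d\rceil \le 2\lceil \ell/(2d-1)\rceil$, while the paper derives $\cost_L \le 2\cost^*_L + 1 - \cost^*_L/d$ and concludes $\cost_L \le 2\cost^*_L$ because both quantities are positive integers.
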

\begin{proof}
We set $R=\set{v_d, v_{2d}, \ldots} \subseteq V$ and start with the empty assignment, i.e. $\reg(v)=0$ for every node $v \in V$. When a path $P$ is presented to the algorithm we set $\reg(v)=1$ for every $v \in R \cap P$ which is not an endpoint of $P$. This strategy clearly $d$-satisfies all the paths.

We now show that this algorithm is $2$-competitive. Consider a region $L$ of $\pp$. Clearly, the number $\cost_L$ of locations from $L$ used by our algorithm is at most $\left\lceil \frac{\abs{L}}{d} \right\rceil$. Combining with  Proposition \ref{prop:OptPathRegion} we get
\[
\cost_L \leq \left\lceil \frac{\abs{L}}{d} \right\rceil \leq \frac{\cost^*_L (2d-1)}{d} + 1=2 \cdot \cost^*_L + 1 - \frac{\cost^*_L}{d}
\]
As both $\cost_L$ and $\cost^*_L$ are positive integers, the above implies that $\cost_L \leq 2 \cdot \cost^*_L$. Summing up for all regions, the lemma follows.
\qed
\end{proof}

\begin{lemma}
For every $d \geq 2$, the competitive ratio of every deterministic online algorithm for $\prb$ is at least $2$, even when $G$ is a path.
\end{lemma}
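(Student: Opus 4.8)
The plan is an adaptive adversary argument that uses only two lightpaths. This is enough because the paper's notion of $c$-competitiveness carries no additive term: it suffices to produce, for an arbitrary deterministic online algorithm $\alg$, an input on which $\alg$ uses at least $2\cdot\cost^{*}$ regenerator locations.

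Fix $d\ge 2$ and take $G$ to be a path with enough extra vertices on both sides of a block $w_{1},\ldots,w_{d}$. I would first present a lightpath $P_{1}$ of length exactly $d+1$ whose internal vertices are $w_{1},\ldots,w_{d}$, so that $W_{1}=\{w_{1},\ldots,w_{d}\}$ is the unique window of $P_{1}$ and $\alg$ is forced to place a regenerator at some vertex of $W_{1}$; let $w_{j}$ be the \emph{rightmost} such vertex after $\alg$ has processed $P_{1}$. Then I would present a second lightpath $P_{2}$ of length $d+1$ \emph{adaptively}, choosing its window $W_{2}$ so that $W_{1}\cap W_{2}\ne\emptyset$ but $w_{j}\notin W_{2}$: if $w_{j}$ is not the right endpoint of $W_{1}$, take $W_{2}=\{w_{j+1},\ldots,w_{j+d}\}$, and if $w_{j}$ is the right endpoint of $W_{1}$, take $W_{2}$ to be the window of $d$ vertices ending one step to the left of $w_{d}$. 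In the first case $W_{1}\cap W_{2}=\{w_{j+1},\ldots,w_{d}\}$; in the second $W_{1}\cap W_{2}=\{w_{1},\ldots,w_{d-1}\}$, which is nonempty exactly because $d\ge 2$ (this, together with the left shift, is why $G$ needs room on both sides of the block).

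Finally I would verify the two halves of the bound. For the optimum: a single regenerator at any vertex of $W_{1}\cap W_{2}$ lies in both windows and hence $d$-satisfies both $P_{1}$ and $P_{2}$, while at least one regenerator is plainly needed, so $\cost^{*}=1$. For the algorithm: after $P_{1}$ it has at least one regenerator, namely at $w_{j}$; if it already has a second regenerator anywhere, then $\cost(\alg)\ge 2$, and otherwise $w_{j}$ is its only regenerator, so since $w_{j}\notin W_{2}$ the window $W_{2}$ is regenerator-free and $\alg$ is forced to add a second, distinct regenerator to $d$-satisfy $P_{2}$. Either way $\cost(\alg)\ge 2=2\cdot\cost^{*}$, so no deterministic online algorithm for $\prb$ is better than $2$-competitive, for every $d\ge 2$, even on a path.

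The only step needing genuine care is this last one: the adaptively chosen $W_{2}$ must force a second location \emph{no matter how many regenerators $\alg$ placed for $P_{1}$, and regardless of any preemptive placements}. The clean way to see it is the dichotomy ``$\alg$ already has two regenerator locations'' versus ``$\alg$'s single regenerator location is $w_{j}$, which by construction lies outside the still-unsatisfied window $W_{2}$,'' both of which yield at least two locations. As a minor remark, if one wants to keep the convention that every edge of $G$ lies on some lightpath, simply let $G$ equal the union $P_{1}\cup P_{2}$, which is again a path.
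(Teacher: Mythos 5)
Your proof is correct and takes essentially the same approach as the paper: one lightpath of length $d+1$ forces a regenerator at some internal node, and a second, adaptively chosen lightpath of length $d+1$ with an endpoint at that node (your $W_2$ in both cases is exactly a path anchored at $w_j$ and shifted to keep the windows overlapping) forces a second location, while one regenerator in the common window serves both. The only difference is presentational --- you exhibit a common window vertex directly, whereas the paper argues that the union of the two paths has at most $2d$ edges and places the optimal regenerator at its center.
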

\begin{proof}
The adversary first presents a path $P_0$ with $d+1$ edges. $P_0$ can be $d$-satisfied using one regenerator. If the algorithm uses two regenerators, then the competitive ratio is $2$ and we are done. Therefore we assume that the algorithm uses one regenerator in some internal node $v$ of $P_0$. Then, the adversary presents the path $P_1$ of length $d+1$, having one endpoint in $v$. The other endpoint is chosen (among the two possible nodes) such that the intersection of $P_0$ and $P_1$ is maximized. The algorithm has to use at least one additional regenerator to $d$-satisfy $P_1$, i.e. it uses at least two regenerators in total.

The intersection of the two paths $P_0$ and $P_1$ is at least $2$ edges because $d+1 \geq 3$ and $P_1$ is chosen such that the intersection is maximized. Then, the union of $P_0$ and $P_1$ is a path $P'$ with at most $2(d+1)-2=2d$ edges. Therefore, $P_0$ and $P_1$ can be $d$-satisfied by placing one regenerator at the center of $P'$.
\qed
\end{proof}

The following lemma shows that the above bound does not hold for randomized algorithms.

\begin{lemma}
For every $d \geq 2$, there is a $(2-1/d^2)$-competitive randomized online algorithm in path topologies for $\prb$.
\end{lemma}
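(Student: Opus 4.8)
The plan is to randomize the offset of the fixed grid used in Lemma~\ref{lem:two_comp_path}. Before any path arrives, pick an integer $r \in \set{0,1,\ldots,d-1}$ uniformly at random and set $R_r = \set{v_i \mid i \equiv r \pmod d}$. Then run the algorithm of Lemma~\ref{lem:two_comp_path} with $R_r$ in place of $\set{v_d,v_{2d},\ldots}$: when a path $P$ arrives, set $\reg(v)=1$ for every $v \in R_r \cap P$ that is not an endpoint of $P$ (and, by the standing assumption $\reg(v,P)=\reg(v,P')$ for paths through a common node when $k=\infty$, this places a regenerator at $v$ for all paths through it).

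First I would verify feasibility, which holds for \emph{every} value of $r$: among any $d$ consecutive vertices $v_m,\ldots,v_{m+d-1}$ of $G$ exactly one lies in $R_r$. Given a path $P$ and any $d$ consecutive internal vertices of $P$, all of them are internal to $P$, so the unique one among them that lies in $R_r$ is internal to $P$ and hence receives a regenerator when $P$ arrives; thus $P$ is $d$-satisfied and the algorithm always outputs a valid assignment.

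Next I would bound the expected cost region by region, as in Lemma~\ref{lem:two_comp_path} but averaging over $r$. Fix a region $L$. Since every vertex of $L$ is internal to some path of $\pp$, every vertex of $L\cap R_r$ eventually gets a regenerator; conversely the algorithm only places regenerators on vertices of $R_r$ that are internal to some path, i.e. on $\bigcup_L (L\cap R_r)$. Hence $\cost_L=\abs{L\cap R_r}$, and because $\sum_{r=0}^{d-1}\abs{L\cap R_r}=\abs{L}$ (each vertex of $L$ has exactly one residue), the uniform average is $\mathbb{E}[\cost_L]=\abs{L}/d$. Proposition~\ref{prop:OptPathRegion} gives $\cost^*_L\geq \abs{L}/(2d-1)$, so
\[
\mathbb{E}[\cost_L] \;=\; \frac{\abs{L}}{d} \;\leq\; \frac{2d-1}{d}\,\cost^*_L \;=\; \left(2-\tfrac{1}{d}\right)\cost^*_L .
\]
Every location of $\reg^*$ is an internal vertex of some path and therefore lies in exactly one region, so $\sum_L \cost^*_L=\cost^*$; summing the displayed inequality over all regions and using linearity of expectation yields $\mathbb{E}[\cost]\leq\left(2-\tfrac1d\right)\cost^*$. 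Since $2-\tfrac1d\leq 2-\tfrac1{d^2}$ for every $d\geq 2$, the algorithm is $(2-1/d^2)$-competitive (in fact $(2-1/d)$-competitive), as claimed.

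The substantive idea is simply to randomize the grid offset; after that everything is bookkeeping. The two points that need a little care are the identities used in the cost bound: that the algorithm's final set of regenerator locations is exactly $\bigcup_L (L\cap R_r)$ (this uses that a region consists entirely of vertices internal to some path, together with the $k=\infty$ convention that regenerators at a node serve all paths through it), and that $\cost^*$ decomposes exactly over the regions. Neither is hard; the effect of the randomization is precisely that the worst-case per-region ratio $\lceil \abs{L}/d\rceil/\cost^*_L$ from Lemma~\ref{lem:two_comp_path} is replaced by the average $\abs{L}/(d\cdot\cost^*_L)$.
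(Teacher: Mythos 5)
Your proof is correct, and while the algorithm is the same as the paper's (a uniformly random offset of the fixed grid from Lemma~\ref{lem:two_comp_path}), your analysis is genuinely sharper. The paper only conditions on the single event that the $d$-th node of the region $L$ lands on the grid (probability $1/d$), applies the improved bound $\cost_L \leq \abs{L}/d$ in that case, and falls back on the crude deterministic bound $\cost_L \leq 2\cost^*_L$ otherwise, which yields $2-1/d^2$. You instead observe that $\cost_L = \abs{L \cap R_r}$ exactly (every vertex of a region is internal to some path, so every grid vertex in $L$ eventually receives a regenerator) and that $\sum_{r=0}^{d-1}\abs{L\cap R_r}=\abs{L}$, so the expectation is exactly $\abs{L}/d$ for every region; combined with Proposition~\ref{prop:OptPathRegion} this gives the strictly better ratio $2-1/d$. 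This is a real gain: at $d=2$ your bound is $3/2$, matching the paper's randomized lower bound exactly, whereas the paper's stated $2-1/d^2 = 7/4$ leaves a gap. The two identities you flag as needing care (the regenerator set is $\bigcup_L(L\cap R_r)$, and $\cost^*$ decomposes over regions since optimal regenerators sit only on internal vertices) are both justified correctly. One incidental remark: your grid $R_r=\set{v_i \mid i\equiv r \pmod d}$ also cleanly covers the first $d$ internal vertices of every path, whereas the paper's $R=\set{v_{i+d},v_{i+2d},\ldots}$ as literally written omits $v_i$ itself; your formulation avoids that (presumably typographical) issue.
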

\begin{proof}
We choose an integer $i$ between $1$ and $d$ uniformly at random. We set $R=\set{v_{i+d}, v_{i+2d}, \ldots} \subseteq V$ and start with the empty assignment, i.e. $\reg(v)=0$ for every node $v \in V$. When a path $P$ is presented to the algorithm we set $\reg(v)=1$ for every $v \in R \cap P$ which is not an endpoint of $P$. This strategy clearly $d$-satisfies all the paths.

Consider a region $L$ of $\pp$. The bound proven for the deterministic algorithm in Lemma~\ref{lem:two_comp_path} clearly holds for this algorithm too, i.e.
$\cost_L \leq 2 \cdot \cost^*_L$. However, whenever the $d$-th node of $L$ is in $R$ (which happens with probability $1/d$) we can prove a better performance. In this case $\abs{L} \geq \cost_L \cdot d$, because there are $d$ edges  before the first regenerator, $d$ edges between the first two regenerators, etc. Therefore, recalling also Proposition \ref{prop:OptPathRegion}, we have
$\cost_L \leq \abs{L}/d \leq \cost^*_L \cdot (2d-1)/d$. The expectance of $\cost_L$ is at most
\[
E[\cost_L] \leq \frac{1}{d} \cdot \frac{2d-1}{d} \cdot \cost^*_L \cdot + \left(1- \frac{1}{d} \right) 2 \cdot \cost^*_L=\left(2 - \frac{1}{d^2} \right) \cdot \cost^*_L.
\]
The result follows from the linearity of expectance.
\qed
\end{proof}

We conclude this section with a lower bound for randomized algorithms.

\begin{lemma}
For every $d \geq 2$, the competitive ratio of every randomized online algorithm for $\prb$ is at least $3/2$ even when $G$ is a path.
\end{lemma}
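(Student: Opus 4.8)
The plan is to use Yao's principle: I will construct a probability distribution $\mu$ over input sequences on a sufficiently long path and show that every \emph{deterministic} online algorithm has expected cost at least $3/2$ under $\mu$, whereas the optimal offline cost equals $1$ on every sequence in the support of $\mu$. Since a randomized algorithm is a distribution over deterministic ones, this yields a lower bound of $3/2$ on the competitive ratio of any randomized algorithm (against the oblivious adversary, hence a fortiori against stronger ones).

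For the construction, label consecutive vertices of the path as $\dots, a_{-1}, a_0, a_1, \dots$. Let $P_0$ be the subpath with $d+1$ edges whose internal vertices are $a_1, \dots, a_d$. Let $P_1^L$ be the subpath with $d+1$ edges ending at $a_2$ (so its internal vertices are $a_{2-d}, \dots, a_1$) and let $P_1^R$ be the subpath with $d+1$ edges starting at $a_{d-1}$ (so its internal vertices are $a_d, \dots, a_{2d-1}$). The distribution $\mu$ puts probability $1/2$ on the request sequence $(P_0, P_1^L)$ and probability $1/2$ on $(P_0, P_1^R)$. The crucial feature is that both sequences start with the \emph{same} request $P_0$, so an online algorithm must react to $P_0$ before learning which of the two extensions arrives.

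For the optimum: each of $P_0, P_1^L, P_1^R$ has exactly $d$ internal vertices, so a single regenerator placed at any one internal vertex $d$-satisfies that path. In $(P_0, P_1^L)$ the internal-vertex sets of $P_0$ and $P_1^L$ meet exactly in $\{a_1\}$, so one regenerator at $a_1$ satisfies both; since $P_0$ has $d+1 \ge 3$ edges at least one regenerator is required, and thus $\cost^* = 1$. Symmetrically $\cost^* = 1$ for $(P_0, P_1^R)$ using $a_d$, so $\mathbb{E}_\mu[\cost^*] = 1$. For the algorithm, fix any deterministic online algorithm and let $R_0$ be the set of regenerator locations it holds right after serving $P_0$; since $P_0$ must be $d$-satisfied, $R_0$ contains an internal vertex of $P_0$, so $|R_0| \ge 1$, and $R_0$ is the same for both sequences. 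If $|R_0| \ge 2$ the algorithm pays at least $2$ on both sequences. If $R_0 = \{a_1\}$, then on $(P_0, P_1^L)$ it may pay only $1$ (as $a_1$ is already internal to $P_1^L$), but on $(P_0, P_1^R)$ it must add a location in $\{a_d, \dots, a_{2d-1}\}$, which does not contain $a_1$ because $d \ge 2$, so it pays at least $2$; its expected cost is then at least $\tfrac12 \cdot 1 + \tfrac12 \cdot 2 = \tfrac32$. The case $R_0 = \{a_d\}$ is symmetric. Finally, if $|R_0| = 1$ but $R_0 \ne \{a_1\}$ and $R_0 \ne \{a_d\}$, then since the unique one-regenerator solution of $(P_0, P_1^L)$ is $\{a_1\}$ and that of $(P_0, P_1^R)$ is $\{a_d\}$, and the vertex in $R_0$ can never be removed, the algorithm pays at least $2$ on both sequences. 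In all cases $\mathbb{E}_\mu[\cost] \ge \tfrac32 = \tfrac32 \cdot \mathbb{E}_\mu[\cost^*]$, and the claim follows from Yao's principle.

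The part requiring the most care is the case analysis on $R_0$: one must use that the online model forces the algorithm to commit to a nonempty $R_0$ already after serving $P_0$ --- otherwise a ``lazy'' algorithm placing nothing after $P_0$ could serve each of the two sequences optimally with a single regenerator chosen only after the second request --- and one must verify that any single regenerator satisfying both paths of either sequence has to lie in the (singleton) intersection of their internal-vertex sets, which relies on each path having exactly $d$ internal vertices and on $d \ge 2$ so that $a_1 \ne a_d$.
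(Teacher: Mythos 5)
Your proof is correct and follows essentially the same route as the paper: a Yao's-principle argument with a uniform distribution over two inputs, each consisting of a fixed path of length $d+1$ followed by one of its two length-$(d+1)$ translates sharing exactly one internal node, so that the optimum is $1$ while any deterministic algorithm pays $2$ with probability at least $1/2$. Your write-up is merely more explicit in the case analysis (in particular the case where the single regenerator after $P_0$ lies at neither shared internal node), which the paper subsumes in the statement that one node can $d$-satisfy at most one of the two continuations.
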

\begin{proof}
Using Yao's principle \cite{BE98}, we give an adversary that presents a randomized input and show that the expected cost of any deterministic algorithm on this input is at least $3/2 \cdot \cost^*$.

Consider some path $P_1$ of length $d+1$, and the two paths $P_{21}$ and $P_{22}$ of length $d+1$ having exactly two edges (i.e. one internal node) in common with $P_1$. With probability $1/2$ the adversary presents the input $\pp_1 = \set{P_1, P_{21}}$ and with probability $1/2$ it presents the input $\pp_2 = \set{P_1, P_{22}}$. We note that for $j \in \set{1,2}$ the input $\pp_j$ can be $d$-satisfied using one regenerator, namely the common internal node of $P_1$ and $P_{2i}$.

Any deterministic algorithm that uses $2$ locations to satisfy the path $P_1$ is $2$-competitive. Therefore we assume that the algorithm uses one regenerator at some internal node $v$ of $P_1$. $v$ can $d$-satisfy at most one of $P_{21},P_{22}$. Therefore with probability at least $1/2$ the algorithm uses a second location. The expected number of locations used by the algorithm is at least $2 \cdot 1/2 + 1 \cdot 1/2 = 3/2$.
\qed
\end{proof}

\newcommand{\algrandompath}{\textsc{RandomizedPath}}
\newcommand {\cs}   {{\cal S}}

\subsection{Upper Bound for General Topologies}
In this section we use the randomized algorithm presented in \cite{AAABN09} for the online set-cover problem. For completeness, we provide brief descriptions of the problem and the algorithm.

\newcommand{\cc}{{\cal C}}
\newcommand{\instsc}{(X, \cs)}
\newcommand{\instosc}{(X, \cs, X')}
An instance of the set cover problem is a pair $\instsc$ where $X=\set{x_1, x_2, \ldots}$ is a ground set of elements, and $\cs=\set{S_1, S_2, \ldots}$ is a collection of subsets of $X$. Given such an instance, one has to find a subset $\cc \subseteq \cs$ that covers $X$, i.e. $\cup_{S_i \in \cc} S_i = X$. In \cite{AAABN09} an online variant of the set cover problem is considered. An instance of the online set cover problem is a triple $\instosc$ where $X$ and $\cs$ are as before, and $X' \subseteq X$ is presented in an online manner, one element at a time. At any given time one has to provide a cover $\cc' \subseteq \cs$ of $X'$, i.e. $X' \subseteq \cup_{S_i \in \cc'}S_i$. Once a set is included in the cover $\cc'$ this decision cannot be changed when the subsequent input is received. In other words, whenever an element is presented an online algorithm has to cover it by at least one set from $\cs$ if it is not already covered. It is important to note that $X$ and $\cs$ are known in advance but $X'$ is given online.

\newcommand{\algsc}{\textsc{OnLineSetCover}}
We proceed with a description of the online algorithm in \cite{AAABN09}. We denote by $S^{(i)}$ the set of all sets containing $x_i$, i.e. $S^{(i)} \defined \set{S_j \in \cs | x_i \in S_j}$. Let $f$ be an upper bound for the frequencies of the elements, i.e. $\forall x_i \in X, \abs{S^{(i)}} \leq f$. The algorithm associates a weight $w_j$ with each set $S_j$ which is initiated to $1/f$. The weight $w^{(i)}$ of each element $x_i \in X$ is the sum of the weights of the sets containing it, i.e. $w^{(i)} = \sum_{S_j \in S^{(i)}} w_j$. See pseudo-code in Algorithm \algsc~below for a description of the algorithm.
\alglanguage{pseudocode}
\begin{algorithm}
\caption{\algsc}
\label{alg:algonlinesc}
\begin{algorithmic}[1]
\State When a non-covered element $x_i \in X$ is presented:
\State Find the smallest non-negative integer $q$ such that $2^q \cdot w^{(i)} \geq 1$;
\For{each set $S_j \in S^{(i)}$}
\State $\delta_j = 2^q \cdot w_j - w_j$;
\State $w_j += \delta_j$;
\EndFor
\State \textbf{do} $4 \log |X|$ times
\State ~~~~Add at most one set (from $S^{(i)}$) to the cover\\
~~~~~~~~where each set $S_j$ is chosen with probability $\delta_j/2$;
\end{algorithmic}
\end{algorithm}

From an instance $\instrlp$ of $\prb$ we build an instance $\instosc$ of the online set cover problem. $X$ is the set of all possible paths of length $d$ in $G$ and $\abs{\cs}=\abs{V}$. Each set $S_j \in \cs$ consists of all the paths in $X$ containing the node $v_j$. For a path $P$, let $P^{(d)}$ be the set of all its sub-paths of length $d$. $X'$ is $\cup_{P \in \pp} P^{(d)}$. Now we observe that for any feasible regenerator assignment $\reg$, $R(\reg)$ corresponds to a set cover, and vice versa, i.e. any set cover $\cc$ corresponds to a feasible regenerator assignment $\reg$ such that $R(\reg)=\cc$. Indeed, a path $P$ is $d$-satisfied if and only if every path of $P^{(d)} \subseteq X'$ contains a node $v_j$ with regenerators, that corresponds to a set $S_j \in \cc$ containing this path. Therefore all the paths $P \in \pp$ are $d$-satisfied if and only if $\cc$ constitutes a set cover of $X'$. Moreover the cost of the set cover is equal to the number of regenerator locations, i.e. $\abs{\cc}=\sum_{v_j} \reg(v_j) = \cost(\reg)$.

When a path $P$ is presented, we present to \algsc~all the paths of $P^{(d)}$ one at a time. For each set $S_j$ added to the cover by \algsc, we set $\reg(v_j)=1$.

We first note that although the number of sets in $X$ is exponential in terms of the input size of our problem, for every path $P$ the set $P^{(d)}$ contains only a polynomial number of paths, therefore the first loop of Algorithm \algsc~ runs only a polynomial number of times. The second loop is executed $\log \abs{X}$ times, which is also polynomial in terms of our input size.

Algorithm \algsc~is proven to be $O(\log \abs{X} \cdot \log f)$-competitive. Note that a path of length $d$ contains $d+1$ nodes, thus $f=d+1$. As the cost of a cover is equal to the cost of a solution of $\instrlp$ we conclude
\begin{lemma}
There is an $O(\log \abs{X} \cdot \log d)$-competitive polynomial-time randomized online algorithm for instances $\instrlp$ of $\prb$ where $X$ is the set of all the paths of length $d$ in $G$.
\end{lemma}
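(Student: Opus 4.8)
The plan is to finish the reduction to online set cover that has already been set up above and then invoke the performance guarantee of \algsc~from \cite{AAABN09}. Concretely, I would argue in three short steps.

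First, I would confirm that the transformation is an exact, cost-preserving correspondence. Running \algsc~on the online sequence $X' = \bigcup_{P \in \pp} P^{(d)}$ and setting $\reg(v_j) = 1$ whenever $S_j$ enters the cover produces a feasible assignment, since a lightpath $P$ is $d$-satisfied exactly when every length-$d$ subpath of $P$ contains a chosen node $v_j$. Because $\abs{\cc} = \sum_{v_j} \reg(v_j) = \cost(\reg)$, and conversely any feasible assignment yields a set cover of $X'$ of the same size, an optimal cover of $X'$ costs exactly $\cost^*$. Hence the competitive ratio of the simulation equals that of \algsc~on this particular instance.

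Second, I would check the polynomial running time. The ground set $X$ of all length-$d$ paths in $G$ is exponentially large, but it is never materialized: \algsc~only manipulates, for each presented element $x_i$, the at most $f = d+1$ sets $S^{(i)}$ containing it (a path of length $d$ has $d+1$ nodes), and the quantity $\log \abs{X}$ appears only as the iteration count $4 \log \abs{X} = O(\log \abs{V})$ of the sampling loop. Moreover each arriving lightpath $P$ contributes only $\abs{P^{(d)}} \leq \abs{V}$ elements to $X'$, so over the whole execution only polynomially many elements are presented and the total work is polynomial.

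Third, I would substitute $f = d+1$ into the $O(\log \abs{X} \cdot \log f)$-competitiveness of \algsc~to get the claimed $O(\log \abs{X} \cdot \log d)$ bound. The only delicate point — essentially the sole thing requiring care — is the one handled in the second step: the exponential-size set cover instance is accessed purely implicitly, through the bounded neighborhoods $S^{(i)}$ and the scalar $\log \abs{X}$, so the algorithm remains both well-defined and efficient even though $X$ is given only implicitly.
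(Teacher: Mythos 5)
Your proposal is correct and follows essentially the same route as the paper: reduce to online set cover via the ground set of length-$d$ paths, feed each arriving lightpath's subpaths $P^{(d)}$ to \algsc, use the cost-preserving correspondence between covers and regenerator locations, and plug $f=d+1$ into the $O(\log\abs{X}\cdot\log f)$ guarantee. One small inaccuracy: the sampling loop runs $4\log\abs{X}$ times, which is $O(d\log\abs{V})$ rather than $O(\log\abs{V})$ in general, but this is still polynomial, so your running-time conclusion stands.
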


In \cite{AAABN09} algorithm \algsc~is de-randomized using the method of conditional expectation. However in this method, in order to calculate the conditional expectancies, one has to consider all the elements of $X$. In our case $X$ is the set of all paths of length $d$ in $G$ which is, in general, exponential in $d$, thus applying the technique in             \cite{AAABN09} directly to our case leads to an exponential algorithm.  Although the definition of competitive ratio does not require polynomial running-time, for practical purposes we would like to have polynomial-time algorithms. The following theorem states some cases for which this condition is satisfied.
\begin{theorem}
There is an $O(\log \abs{X} \cdot \log d)$-competitive polynomial-time deterministic online algorithm for instances $\instrlp$ of $\prb$ in each one of the following cases where $X$ is the set of all the paths of length $d$ in $G$.
\begin{itemize}
\item Both $d$ and the maximum degree $\Delta(G)$ of $G$ are bounded by some constant.
\item The number of cycles in $G$ is bounded, in particular $G$ is a ring.
\item $G$ has bounded treewidth, in particular $G$ is a tree.
\end{itemize}
\end{theorem}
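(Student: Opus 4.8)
The plan is to start from the randomized $O(\log\abs{X}\cdot\log d)$-competitive algorithm of the previous lemma --- which runs \algsc~on the online set-cover instance derived from $\instrlp$ --- and to de-randomize it by the method of conditional expectation, exactly as in \cite{AAABN09}. That de-randomization maintains a pessimistic estimator $\Phi$ whose value is, up to quantities computable in polynomial time anyway, of the form ``(sets committed so far)'' $+\,\lambda\sum_{x\in X}\varphi_x$, where $\lambda$ is a large constant and $\varphi_x$ depends only on the current weights $w_j$ of the sets containing $x$; after each of the $4\log\abs{X}$ random set-choices triggered by a presented element, one commits to whichever choice does not increase the conditional expectation of $\Phi$. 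Processing one presented path $P$ uses only $\abs{P^{(d)}}\cdot 4\log\abs{X}$ random bits, so this works online in polynomial time \emph{provided} $\Phi$ and its conditional expectations are computable in polynomial time; the sole obstacle is the term $\sum_{x\in X}\varphi_x$, because $x$ ranges over \emph{all} length-$d$ paths of $G$, of which there can be exponentially many. Since a set $S_j$ corresponds to the vertex $v_j$, ``the sets containing $x$'' are the vertices on the path $x$, and --- writing $\varphi_x$ in product form $\prod_{v_j\in x}g_j$ (e.g.\ $\varphi_x=\prod(1-\delta_j/2)$, the probability that $x$ stays uncovered) --- the quantity $\sum_{x\in X}\varphi_x=\sum_{x\in X}\prod_{v_j\in x}g_j$ is exactly a weighted count of the length-$d$ paths of $G$. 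It thus suffices, in each of the three cases, to compute this weighted path-count (and its conditionings on the already-committed sets) in polynomial time; the de-randomized algorithm is then polynomial and keeps the ratio $O(\log\abs{X}\cdot\log d)$, since $\Phi$ never increases along the committed choices and bounds both feasibility and cost.

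In the first two cases, and for trees, $\abs{X}$ is itself polynomial in $\abs{V}$, so one simply enumerates $X$ (and $X'=\bigcup_{P\in\pp}P^{(d)}$, which is always polynomial) and runs the de-randomization of \cite{AAABN09} verbatim. When $d$ and $\Delta(G)$ are constants there are at most $\Delta(G)(\Delta(G)-1)^{d-1}$ length-$d$ paths from any fixed start vertex, so $\abs{X}\le\abs{V}\cdot\Delta(G)^{d}$. When $G$ is a tree the path joining two vertices is unique, so $\abs{X}\le\binom{\abs{V}}{2}$. When the cyclomatic number $\mu(G)=\abs{E}-\abs{V}+(\text{number of components})$ is bounded, write $G$ as a spanning forest plus $\mu(G)$ extra edges; a simple path is a concatenation of forest-segments linked by some of these extra edges and is determined by its two endpoints together with the ordered, oriented sub-sequence of extra edges it uses, whence $\abs{X}\le\abs{V}^2\cdot 2^{O(\mu(G)\log\mu(G))}$ --- polynomial, and $\abs{X}\le\abs{V}$ when $G$ is a ring.

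The bounded-treewidth case is where enumeration fails: such graphs can have exponentially many long simple paths (a ``ladder'' has treewidth $2$), so $\sum_{x\in X}\prod_{v_j\in x}g_j$ must be evaluated without listing $X$. Here I would fix a tree decomposition of $G$ of width $w=O(1)$ and run a bottom-up dynamic program over it: a sub-problem at a bag $B$ records, for each admissible way a partial path meets $B$ --- an assignment to each vertex of $B$ of a role (off the path / a path-endpoint being built / an interior vertex of a fragment) plus a pairing of the loose fragment-ends currently in $B$ --- and for each vector of current fragment lengths in $\set{0,1,\dots,d}$, the total $g$-weight of all realizations of exactly those fragments in the already-processed part of $G$. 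Introduce/forget/join steps update these tables in the standard way (the factor $g_j$ enters when $v_j$ is forgotten, according to its role); the number of states is $2^{O(w\log w)}\cdot(d+1)^{O(w)}=\mathrm{poly}$, and summing, at the root, the entries corresponding to one path of length exactly $d$ gives $\sum_{x\in X}\prod_{v_j\in x}g_j$. Re-running this after each committed rounding decision (which merely kills some factors $g_j$, i.e.\ forbids the chosen vertices) yields the required conditional expectations in polynomial time.

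I expect essentially all the difficulty to sit in this last step: one must set up the fragment bookkeeping so that each simple path of length $d$ is counted exactly once (not once per decomposition into fragments), enforce the exact-length requirement through the length vectors, and --- the point that welds the two halves together --- verify that the pessimistic estimator of \cite{AAABN09} can be taken in the product form over $V(x)$ that the dynamic program consumes. The counting bounds of the middle paragraph, by contrast, are routine.
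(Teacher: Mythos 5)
Your proposal is correct and follows exactly the route the paper sets up (the paper itself defers this proof to its journal version): de-randomize \algsc~by the method of conditional expectations, observe that the only computational obstacle is evaluating the potential's sum over all of $X$, and handle it either by direct enumeration when $\abs{X}$ is polynomial (bounded $d$ and $\Delta(G)$, bounded cyclomatic number, trees) or, in the genuinely hard bounded-treewidth case, by a weighted count of length-$d$ simple paths via a dynamic program over a tree decomposition, which applies because the potential of \cite{AAABN09} is a sum over $x\in X$ of terms of the form $\prod_{v_j\in x}g_j$ (an exponential of $w^{(x)}=\sum_{j}w_j$ factors this way). The counting bounds and the fragment-bookkeeping DP you sketch are standard and suffice to complete the argument.
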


\comment{Proof in journal version}

\subsection{Lower Bound for General Topologies}\label{subsec:lower bounds}
In this section we show a lower bound nearly matching the upper bound in the previous subsection, by using the online version of a reduction in \cite{FMMMZ11} of set cover to $\prb$. Given an instance $\instosc$ of online set cover we build an instance $\instrlp$ of $\prb$ as follows (see Figure \ref{fig:general}).

\begin{figure}[h]\label{fig:reduction}
\begin{center}
\scalebox{0.8}{\begin{picture}(0,0)%
\special{psfile=reduction.pstex}%
\end{picture}%
\setlength{\unitlength}{2901sp}%
\begingroup\makeatletter\ifx\SetFigFont\undefined
% extract first six characters in \fmtname
\def\x#1#2#3#4#5#6#7\relax{\def\x{#1#2#3#4#5#6}}%
\expandafter\x\fmtname xxxxxx\relax \def\y{splain}%
\ifx\x\y   % LaTeX or SliTeX?
\gdef\SetFigFont#1#2#3{%
  \ifnum #1<17\tiny\else \ifnum #1<20\small\else
  \ifnum #1<24\normalsize\else \ifnum #1<29\large\else
  \ifnum #1<34\Large\else \ifnum #1<41\LARGE\else
     \huge\fi\fi\fi\fi\fi\fi
  \csname #3\endcsname}%
\else
\gdef\SetFigFont#1#2#3{\begingroup
  \count@#1\relax \ifnum 25<\count@\count@25\fi
  \def\x{\endgroup\@setsize\SetFigFont{#2pt}}%
  \expandafter\x
    \csname \romannumeral\the\count@ pt\expandafter\endcsname
    \csname @\romannumeral\the\count@ pt\endcsname
  \csname #3\endcsname}%
\fi
\fi\endgroup
\begin{picture}(7625,4266)(164,-3658)
\put(6551,389){\makebox(0,0)[lb]{\smash{\SetFigFont{9}{10.8}{rm}$S_m$}}}
\put(189,-461){\makebox(0,0)[lb]{\smash{\SetFigFont{9}{10.8}{rm}$s_1$}}}
\put(201,-985){\makebox(0,0)[lb]{\smash{\SetFigFont{9}{10.8}{rm}$s_2$}}}
\put(214,-1535){\makebox(0,0)[lb]{\smash{\SetFigFont{9}{10.8}{rm}$s_3$}}}
\put(189,-2848){\makebox(0,0)[lb]{\smash{\SetFigFont{9}{10.8}{rm}$s_{n-1}$}}}
\put(164,-3598){\makebox(0,0)[lb]{\smash{\SetFigFont{9}{10.8}{rm}$s_n$}}}
\put(1764,401){\makebox(0,0)[lb]{\smash{\SetFigFont{9}{10.8}{rm}$S_1$}}}
\put(2551,401){\makebox(0,0)[lb]{\smash{\SetFigFont{9}{10.8}{rm}$S_2$}}}
\put(3551,401){\makebox(0,0)[lb]{\smash{\SetFigFont{9}{10.8}{rm}$S_3$}}}
\put(7771,-1024){\makebox(0,0)[lb]{\smash{\SetFigFont{9}{10.8}{rm}$t_2$}}}
\put(1226,-136){\makebox(0,0)[lb]{\smash{\SetFigFont{9}{10.8}{rm}$v_{1,1}$}}}
\put(2489,-1024){\makebox(0,0)[lb]{\smash{\SetFigFont{9}{10.8}{rm}$v_{2,2}$}}}
\put(7789,-361){\makebox(0,0)[lb]{\smash{\SetFigFont{9}{10.8}{rm}$t_1$}}}
\put(7751,-1724){\makebox(0,0)[lb]{\smash{\SetFigFont{9}{10.8}{rm}$t_3$}}}
\put(7701,-2724){\makebox(0,0)[lb]{\smash{\SetFigFont{9}{10.8}{rm}$t_{n-1}$}}}
\put(7689,-3436){\makebox(0,0)[lb]{\smash{\SetFigFont{9}{10.8}{rm}$t_n$}}}
\put(5626,389){\makebox(0,0)[lb]{\smash{\SetFigFont{9}{10.8}{rm}$S_{m-1}$}}}
\end{picture}}
\caption{Reduction from online set cover to \prb}
\label{fig:general}
\end{center}
\end{figure}

We set $d=\abs{\cs}$. The node set $V(G)$ of $G$ is $\cs \cup V_1 \cup V_2$ where $V_1=\set{s_i, t_i | 1 \leq i \leq \abs{X}}$ and $V_2=\set{v_{ij} | 1 \leq i \leq \abs{X}, 1 \leq j \leq \abs{\cs}}$. We proceed with a description of the paths $\pp$. The edge set of $G$ will be all the edges induced by the paths of $\pp$. For each element $x_i$ there is a path $P_i$ in $\pp$ between $s_i$ and $t_i$. If $x_i \in S_j$ then $S_j \in V(G)$ is an internal node of $P_i$, otherwise $v_{ij}$ is an internal node of $P_i$. The internal nodes are ordered within the path $P_i$ by their $j$ index, i.e. the path $x_i$ is of the form $(s_i-u_1-u_2-\cdots-u_{\abs{\cs}}-t_i)$ where $u_j$ is either $S_j$ or $v_{ij}$ as described before.

By this construction every path $x_i$ has exactly $\abs{\cs}=d$ internal nodes. Therefore a regenerator assignment is feasible if and only if it assigns at least one regenerator to one of the internal nodes of every path. Without loss of generality every element $x_i$ is contained in at least one set $S_j$, otherwise no set cover exists. A feasible regenerator assignment $\reg$ corresponds to a set cover, in the following way. We first obtain a regenerator assignment $\reg'$ such that $\reg'(v_{ij})=0$ for every $v_{ij} \in V_2$ and $\cost(\reg') \leq \cost(\reg)$. For  every node with $\reg(v_{ij})=1$ we set $\reg'(v_{ij})=0$, and if $P_i$ is not $d$-satisfied in $\reg'$ we choose arbitrarily a node $S_j$ on $P_i$ and set $\reg'(S_j)=1$. Now $R(\reg') \subseteq \cs$ is a set cover of cardinality at most $\cost(\reg)$.

\renewcommand{\alg}{\textsc{Alg}}
\begin{lemma}
There is no $O(\frac{\nominator}{\log (\nominator)})$-competitive online algorithm for $\prb$.
\end{lemma}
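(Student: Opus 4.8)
The plan is to upgrade the (offline) reduction described just above into an \emph{online} reduction from online set cover to online $\prb$ (it is the online analogue of the reduction of~\cite{FMMMZ11}), and then to invoke the known lower bound of Alon et al.~\cite{AAABN09} for online set cover, after re-expressing its parameters in terms of $\abs{E}$ and $d$.

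First I would note that the construction is already online-ready: the graph $G$ and the distance $d=\abs{\cs}$ depend only on $\instsc$, which is known in advance in online set cover, so they constitute the offline part of a $\prb$ instance $\instrlp$; and revealing the element $x_i\in X'$ corresponds to revealing the path $P_i$. I would run any online $\prb$ algorithm $\alg$ on this stream and maintain a set cover via the transformation $\reg\mapsto\reg'$ from above, applied path by path and hence online: when $P_i$ arrives, $\alg$ must place, upon arrival, a regenerator on some internal node of $P_i$ (since $P_i$ has exactly $d$ internal nodes, the only window of $d$ consecutive internal nodes is all of them); if that node is a set-node $S_j$ we keep it, and otherwise it is a private node $v_{ij}\in V_2$ and we instead add an arbitrary $S_j\ni x_i$ to the cover. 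This gives, for every prefix, a valid cover of size $\abs{R(\reg')}\le\cost(\reg)$. In the reverse direction, placing regenerators on the set-nodes of any set cover $d$-satisfies every $P_i$ (each $P_i$ has some $S_j$ with $x_i\in S_j$ as an internal node), so $\cost^*$ is at most the size of an optimal cover. Consequently a $c$-competitive deterministic online algorithm for $\prb$ yields a $c$-competitive deterministic online algorithm for set cover.

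It then remains to match the parameters, which is the only real computation. Alon et al.~\cite{AAABN09} exhibit, for appropriate $\abs{X}$ and $\abs{\cs}$, families of hard instances on which every deterministic online set-cover algorithm has competitive ratio $\Omega\!\left(\frac{\log\abs{X}\cdot\log\abs{\cs}}{\log\log\abs{X}+\log\log\abs{\cs}}\right)$; I would take such a family with $\abs{\cs}\le\sqrt{\abs{X}}$ and $\log\abs{\cs}=\Theta(\log\abs{X})$ (e.g.\ $\abs{\cs}=\lfloor\sqrt{\abs{X}}\rfloor$), for which this bound equals $\Omega\!\left(\frac{(\log\abs{X})^2}{\log\log\abs{X}}\right)$. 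In the $\prb$ instance produced from it, $d=\abs{\cs}$ so $\log d=\log\abs{\cs}$, and since each $P_i$ carries $d+1$ edges while the two edges incident to its endpoints $s_i,t_i$ are private to $P_i$, we get $2\abs{X}\le\abs{E}\le(d+1)\abs{X}$, whence $\abs{E}/d\in[2\abs{X}/\abs{\cs},\,2\abs{X}]$ and, using $\abs{\cs}\le\sqrt{\abs{X}}$, $\log(\abs{E}/d)=\Theta(\log\abs{X})$. Therefore $\nominator=\Theta(\log\abs{X}\cdot\log\abs{\cs})$ and $\log(\nominator)=\Theta(\log\log\abs{X}+\log\log\abs{\cs})$, so the ratio forced on $\alg$ is $\Omega\!\left(\frac{\nominator}{\log(\nominator)}\right)$; since $\abs{E}\to\infty$ along this family, no deterministic online algorithm for $\prb$ is $O\!\left(\frac{\nominator}{\log(\nominator)}\right)$-competitive. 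I expect the main obstacle to be exactly this bookkeeping step — keeping $\log(\abs{E}/d)$ within a constant factor of $\log\abs{X}$ (ensured by the crude edge count once $\abs{\cs}$ is kept polynomially below $\abs{X}$), and checking that the set-cover lower bound retains its stated order in the regime $\abs{\cs}=\abs{X}^{\Theta(1)}$, $\abs{\cs}<\abs{X}$.
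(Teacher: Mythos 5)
Your proposal is correct and follows essentially the same route as the paper: an online version of the set-cover-to-\prb{} reduction, an online transformation of the regenerator assignment back to a (monotone) set cover, and a translation of the parameters ($d=\abs{\cs}$, $\abs{E}=\Theta(\abs{X}\cdot\abs{\cs})$, hence $\nominator=\Theta(\log\abs{X}\cdot\log\abs{\cs})$) so that the lower bound of \cite{AAABN09} applies. Your bookkeeping of $\abs{E}$ and the choice of a hard family with $\log\abs{\cs}=\Theta(\log\abs{X})$ is in fact slightly more careful than the paper's terse $\Theta$-claims, but it is the same argument.
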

\begin{proof}
Assume by contradiction that there is an $O(\frac{\nominator}{\log (\nominator)})$-competitive algorithm $\alg$ for $\prb$. From an instance $\instosc$ of online set cover we build an instance of $\prb$ as described in the above discussion, and whenever we are presented an element $x_i \in X' \subseteq X$ we present the path $P_i$ to \alg. We transform the regenerator assignment returned by \alg~to a set cover $\cc$ as described above. Note that the transformation does not exclude a set $S_j$ from $\cc$ if is was already in $\cc$ before $x_i$ was presented, thus $\cc$ is an online set cover.
\newcommand{\nominatorsc}{\log \abs{X} \cdot \log \abs{\cs}}
We note that $\abs{V}=\Theta(\abs{X} \cdot \abs{\cs}), \abs{E}=\Theta(\abs{V}), d=\Theta(\abs{\cs})$. This implies an $O(\frac{\nominatorsc}{\log (\nominatorsc)})$-competitive algorithm for the online set cover problem, which is proven to be impossible in \cite{AAABN09}.
\qed
\end{proof}

\section{Path Maximization in Path Topology. ($k=1, d=2$)}\label{sec:pathktwo}
In this section we consider possibly the simplest instances of the $\prbmax$ problem, i.e. the case where the network is a path, and $k=1$, $d=2$.

We say that an instance is \emph{feasible}, if there is a regenerator assignment that $d$-satisfies all the paths in $\pp$, and \emph{infeasible} otherwise. We first show in Section~\ref{sec:koneinfeasible} that if the input instance is infeasible,
no online algorithm (for $\prbmax$) has a small competitive ratio; precisely, we show that no online algorithm
is better than $\sqrt{l}/2$-competitive, where $l$ is the length of the longest path in the input.
We then focus on feasible instances in Section~\ref{sec:konefeasible}.

\subsection{Infeasible Instances}
\label{sec:koneinfeasible}
We show that there is a lower bound in terms of the length of the longest path
if the input instance is infeasible, as follows:

\begin{lemma}
\label{thm:koneinfeasible}
Consider the path topology. For $k=1$ and $d=2$,
any deterministic online algorithm for $\prbmax$ has a competitive ratio at least $\sqrt{l}/2$,
where $l$ is the number of internal vertices of the longest path.
\end{lemma}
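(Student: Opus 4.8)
I only sketch the intended plan. The idea is an \emph{adaptive} adversary that, against any deterministic online $\alg$, forces a long contiguous stretch $I$ of the host path to become entirely ``used up'' (every vertex carrying a regenerator), paying only few of $\alg$'s lightpaths for it, so that afterwards $\alg$ cannot $2$-satisfy any short lightpath inside $I$, whereas an offline solution that discards the lightpaths causing the clogging still $2$-satisfies $\Theta(l)$ short lightpaths there.

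The structure of the case $d=2$, $k=1$ that I would exploit is this: a lightpath $P$ with internal vertices $u_{1},\ldots ,u_{m}$ is $2$-satisfied by $\reg$ iff $T_{P}:=\{u_{i}\mid \reg(u_{i},P)=1\}$ is a vertex cover of the path $u_{1}-\cdots -u_{m}$ (so $|T_{P}|\ge \lfloor m/2\rfloor$, and $T_{P}$ meets every edge $u_{j}u_{j+1}$); and, since $k=1$, the sets $T_{P}$ over the lightpaths a fixed $\reg$ satisfies are pairwise disjoint. Two consequences will be used: (i) once all vertices of a stretch lie in such sets, no lightpath with internal segment inside that stretch is ever satisfiable; and (ii) a span with $w$ internal vertices admits only $O(1)$ pairwise disjoint vertex covers, so an offline optimum gains only $O(1)$ from identical copies of any single lightpath.

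The request sequence: the adversary first presents an ``anchor'' $P_{0}$ with $l$ internal vertices occupying a stretch $I$ (so $P_{0}$ is the longest lightpath, which is what fixes $l$). If $\alg$ fails to $2$-satisfy $P_{0}$, the adversary halts, and then the optimum satisfies $P_{0}$ while $\alg$ satisfies nothing, so $\alg$ is not $c$-competitive for any finite $c$; hence assume $\alg$ satisfies $P_{0}$ with a cover $T_{0}$ of $I$. The adversary then cuts $I$ into $\Theta(\sqrt l)$ consecutive blocks of length $\approx\sqrt l/2$ and handles them left to right, presenting for each block $B$ a lightpath whose internal segment essentially spans $B$ --- each such lightpath repeated a bounded number of times. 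Since $T_{0}$ already covers all edges inside $B$, any cover $\alg$ commits for a block lightpath, together with $T_{0}\cap B$, uses up all of $B$; the repeated copies are there to deprive $\alg$ of the option of cheaply declining, since by (ii) the optimum gains only $O(1)$ per block from the copies, so declining the copies of too many blocks is ruinous for $\alg$. Thus, modulo a careful amortized argument, $\alg$ ends up using up essentially all of $I$. Finally the adversary presents $\Theta(l)$ short lightpaths $Q_{1},Q_{2},\ldots$, each with three internal vertices and with pairwise distinct centres tiling $I$; by (i), $\alg$ satisfies none of them. Against all this, the offline optimum discards $P_{0}$ and all block lightpaths and places one regenerator at the centre of each $Q_{i}$ --- distinct singletons, each a vertex cover of a three-vertex path --- thereby $2$-satisfying all $\Theta(l)$ of the $Q_{i}$; whereas $\alg$ satisfies only $P_{0}$ plus one block lightpath per block, i.e.\ $\Theta(\sqrt l)$ lightpaths in total. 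With blocks of length $\approx\sqrt l/2$ (hence $\approx 2\sqrt l$ of them) this yields a competitive ratio of at least $\sqrt l/2$.

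The step I expect to be the main obstacle is making the forcing argument airtight against \emph{every} deterministic strategy --- in particular one that deliberately refuses block lightpaths in order to keep parts of $I$ only half-used and so still satisfy short lightpaths there: one must show that the adversary can react appropriately (further copies, or entering the short phase early) so that the ratio is still $\sqrt l/2$, presumably via a potential/amortization that compares $\alg$'s progress with the optimum's on the blocks processed so far. A secondary technical point is to verify feasibility of the optimum's assignment, i.e.\ that $\Theta(l)$ short lightpaths with distinct centres really fit inside the stretch occupied by $P_{0}$ and that every lightpath used has at least $d+1$ edges, as assumed throughout.
\qed
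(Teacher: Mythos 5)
Your skeleton (anchor lightpath, then block lightpaths nested in it, then very short lightpaths) is the same as the paper's, but the step you yourself flag as the main obstacle is exactly the one that is missing, and your proposed repair does not obviously close it. The difficulty: an algorithm that declines the block lightpaths keeps each block covered only by the anchor's vertex cover, i.e.\ roughly half the vertices of the block remain free, and such an algorithm can later satisfy $\Theta(\sqrt{l})$ of the short lightpaths per block --- $\Theta(l)$ in total --- so without a forcing argument the ratio collapses to $O(1)$. Your fix via bounded repetition is shaky: since a path admits at most two pairwise disjoint vertex covers, the optimum gains only $O(1)$ per block from identical copies, whereas the algorithm, by declining all copies of a block, forfeits $O(1)$ satisfied lightpaths but preserves the ability to collect $\Theta(\sqrt{l})$ short lightpaths there later; so declining is not ``ruinous'' at the block level, and the amortization you defer to would have to do all the work. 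As written, the argument does not go through.

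The paper sidesteps forcing entirely by making the \emph{last} phase adaptive rather than the middle one: after presenting the $\sqrt{l}$ block lightpaths (each with $\sqrt{l}$ disjoint internal vertices, no copies needed), the adversary releases the $\sqrt{l}/2$ two-internal-vertex lightpaths \emph{only inside the blocks whose block lightpath the algorithm satisfied}. In an accepted block the anchor's cover and the block lightpath's cover are two disjoint vertex covers of the block, hence exhaust every vertex of it, so none of the subsequent short lightpaths there can be satisfied; in a declined block no short lightpaths ever appear, so declining gains the algorithm nothing while the optimum still collects that block lightpath. If $x$ blocks are accepted, the algorithm satisfies $x+1$ lightpaths while the optimum discards the anchor and satisfies all $\sqrt{l}$ block lightpaths plus the $x\sqrt{l}/2$ short ones (choosing, say, odd positions for the short lightpaths and even positions for the block lightpath), giving $\frac{(x+2)\sqrt{l}}{2(x+1)}>\sqrt{l}/2$ for every $x\geq 0$ --- no potential function or case analysis over the algorithm's strategy is needed. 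You would need either to adopt this adaptive release of the short lightpaths or to actually supply the amortized forcing argument before your version can be accepted.
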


\begin{proof}
The adversary first releases a path of length $l+1$ with $l$ internal vertices.
The online algorithm has to satisfy this path,
otherwise, the competitive ratio is unbounded.
Then the adversary releases $\sqrt{l}$ paths along the first path
each with $\sqrt{l}$ (disjoint) internal vertices.
If the online algorithm does not satisfy any of these paths,
the competitive ratio is at least $\sqrt{l}$ and we are done.
Suppose $x$ of these paths are satisfied.
In order to make the first path and these $x$ paths
$2$-satisfied,
there is one regenerator placed in each node along
these $x$ paths.
For each of these $x$ paths $P$,
the adversary releases $\sqrt{l}/2$ paths along $P$
each with two (disjoint) internal vertices.
The online algorithm is not able to satisfy any of these short paths
and the total number of $2$-satisfied paths is $x+1$.
On the other hand,
the optimal offline algorithm satisfies all the paths except the first
path of length $l$, i.e., $\sqrt{l} + x\sqrt{l}/2$ paths.
As a result, the competitive ratio of the online algorithm is
$\frac{(x+2)\sqrt{l}}{2(x+1)} > \sqrt{l}/2$.
\qed
\end{proof}

\subsection{Feasible Instances}
\label{sec:konefeasible}
We now consider feasible instances,
that is, instances, where there exists a placement
of regenerators such that all paths are satisfied.
We will prove that, for feasible instances,
there is a tight bound of $3$ for the competitive ratio.
That is, we provide an online algorithm Algorithm~\ref{online-feasible} with competitive ratio
$3$, and we show a lower bound
of $3$ for the competitive ratio of every deterministic online algorithm
for feasible instances.

Note that a regenerator assignment $2$-satisfies a path $P$ if and only if it constitutes a vertex cover of the edges of $P$, except its first and last edges. Therefore, in this section, for simplicity we assume that the leftmost and rightmost edges of the paths have been removed and a regenerator assignment is a vertex cover of the edges of the paths.

Algorithm~\ref{online-feasible} adopts a greedy approach and satisfies
a newly presented path whenever possible.
When a path $P_i$ is presented,
it checks whether there exist two consecutive internal vertices of $P_i$ that
are already assigned regenerators for previous paths.
If yes, this means it is impossible (under the current assignment) to satisfy $P_i$.
Otherwise, the algorithm satisfies $P_i$, as follows. There are two possible locations for the leftmost regenerator of $P_i$, namely, either its leftmost internal node, or the internal node adjacent to it. Among these two alternatives
we choose the alternative that uses the smaller number of regenerators
by trying the following regenerator allocation process.
Suppose we put a regenerator at a certain internal node $v$ of $P_i$.
We check whether the node at distance 2 from $v$ already has a regenerator;
if not, we put a regenerator there and continue;
if yes, we put a regenerator at the node at distance 1 from $v$
\footnote{The node at distance 1
must have no regenerator, else there are two consecutive internal nodes
with regenerators and the algorithm would have rejected the path.}.
This continues until $P_i$ is $2$-satisfied.

\alglanguage{pseudocode}
\begin{algorithm}
\caption{Online algorithm for a path-topology, $k=1$ and $d=2$.}
\label{online-feasible}
\begin{algorithmic}[1]
\State When the path $P_i$ with endpoints $s_i,t_i$ is presented:
\If{$\reg(v)=\reg(v')=1$ for two consecutive internal nodes $v,v'$ of $P_i$}
\State leave $P_i$ unsatisfied;
\Else
\State $last \gets s_i$
\While {$P_i$ is not $2$-satisfied}
\If {$\reg(last+2)=0$}
\State $last \gets last+2$
\Else
\State $last \gets last+1$
\EndIf
\State $\reg(last,P_i) \gets 1$
\EndWhile
\EndIf
\end{algorithmic}
\end{algorithm}

\begin{theorem}\label{online-feasible-comp-ratio-3-lem}
Algorithm~\ref{online-feasible} is $3$-competitive for $\prbmax$ for feasible inputs in path topologies, when $k=1$ and $d=2$.
\end{theorem}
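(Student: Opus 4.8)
The plan is to exploit feasibility twice: once to pin down the offline optimum, and once to control how much ``damage'' the greedy choices of Algorithm~\ref{online-feasible} can do. Since the instance is feasible, the offline optimum satisfies all $n$ paths, so it suffices to show that Algorithm~\ref{online-feasible} satisfies at least $n/3$ of them; equivalently, writing $a$ for the number of accepted paths and $b=n-a$ for the number of rejected ones, it suffices to prove $b\le 2a$. First I would dispose of well-definedness. Recall that, after trimming the extreme edges, a path $P$ is $2$-satisfied exactly when the occupied nodes form a vertex cover of the path on the internal vertices of $P$, and that the accept-test of Algorithm~\ref{online-feasible} says precisely that $P_i$ does not yet contain two consecutive occupied internal vertices. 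This is exactly the condition that prevents the allocation loop from getting stuck: whenever the loop wants to place a regenerator at distance $1$, that node is free (the argument of the paper's footnote). Hence on every input the algorithm produces a valid assignment, and a path is rejected only because two consecutive internal vertices of it already carry regenerators placed for earlier \emph{accepted} paths.

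The engine of the proof is one consequence of feasibility: \textbf{for every edge $\{v,v+1\}$ of the host path, at most two paths of $\pp$ have both $v$ and $v+1$ as internal vertices.} Indeed, each such path must receive a regenerator at $v$ or at $v+1$; since $k=1$ these regenerators occupy distinct nodes; and $\{v,v+1\}$ has only two nodes. In particular, at most two rejected paths can be blocked by the \emph{same} pair of consecutive occupied vertices.

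The plan is then a charging argument. For each rejected path $R$, fix a canonical blocking pair $\{v_R,v_R+1\}$ (say the leftmost pair of consecutive internal vertices of $R$ that are occupied on $R$'s arrival) and let $A_R,B_R\in\mathcal A$ be the accepted paths owning the two regenerators there (possibly $A_R=B_R$); charge $R$ to the later-arriving of $A_R,B_R$. By the feasibility fact above, $b\le 2\cdot\#\{\text{distinct blocking pairs used by rejected paths}\}$, so the task reduces to assigning the distinct blocking pairs to accepted paths in an essentially one-to-one way. For this one must use the precise behaviour of Algorithm~\ref{online-feasible}: it lays regenerators down left to right at distance exactly $2$, \emph{except} that when the node two steps ahead is already occupied it is forced to ``bump'' by one step, and this is the only way it ever creates two consecutive occupied nodes; moreover it picks the better of the two leftmost starting positions, so it never uses more regenerators than necessary. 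The intended picture is that any pair of consecutive occupied vertices straddles such a bump of some accepted path against an \emph{earlier} regenerator, and that, for a fixed accepted path, feasibility (no three paths through an edge) together with the canonical, minimum-cardinality shape of the allocation limits how many such bumps can actually serve as the leftmost blocking pair of a rejected path — ultimately yielding $b\le 2a$.

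The hard part — essentially the whole proof — is this last accounting step. The obstacle is that an accepted path may place many regenerators, each \emph{a priori} an endpoint of some blocking pair, so one cannot simply charge ``to the owner of a regenerator''; the resolution has to interleave the global feasibility bound with the local left-to-right structure of the greedy allocation. I expect it to require a case analysis of each charged rejection $R$ according to whether $v_R$ is the first internal vertex of $R$, whether $A_R=B_R$, and how the late-arriving owner came to place its regenerator at $v_R$ or $v_R+1$ (a distance-$2$ step, a bump against an earlier regenerator, or its own first regenerator), each case being resolved by observing that the configuration forces a dedicated earlier accepted path or is ruled out by the ``at most two paths per edge'' bound. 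Separately, one should verify the matching lower bound of $3$, which is where the two bumps charged to a single accepted path are realized simultaneously in a feasible instance.
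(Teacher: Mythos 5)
Your overall strategy coincides with the paper's: reduce the claim to showing that the number of rejected paths is at most twice the number of accepted ones, and drive the argument with the feasibility consequence that every edge lies on at most two paths of $\pp$. Up to that point the proposal is sound. But there is a genuine gap exactly where you flag it yourself: the assertion that the charging is ``essentially one-to-one,'' i.e.\ that each accepted path absorbs at most two charges, is never established. You describe the case analysis you expect to need (whether $v_R$ is the first internal vertex of $R$, whether $A_R=B_R$, how the later owner came to occupy its node) and assert that each case ``is resolved by observing that the configuration forces a dedicated earlier accepted path or is ruled out,'' but this accounting \emph{is} the theorem; without carrying it out, nothing bounds the number of distinct blocking pairs whose later-arriving owner is one and the same accepted path.

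The paper closes this gap with a concrete two-case argument that your sketch gestures at but does not supply. When the two regenerators of the blocking pair $ab$ belong to different satisfied paths $P_j,P_h$, feasibility forces at least one of $a,b$ to be an endpoint of its owner, and the charge goes to that owner (with the pair $\{P_j,P_h\}$ jointly absorbing the at most two rejected paths through $ab$). When both regenerators belong to the same satisfied path $P_j$, the key structural fact is that Algorithm~\ref{online-feasible} places two consecutive regenerators for $P_j$ only when bumped on both sides, so there exist two \emph{other} satisfied paths $P_h,P_\ell$ with regenerators at the neighbours $d$ and $c$ of $a$ and $b$; feasibility applied to the three edges $da,ab,bc$ then shows at most two rejected paths can be blocked there, and they are charged to $\{P_h,P_\ell\}$ rather than to $P_j$. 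The bookkeeping closes only because each satisfied path can receive at most one charge from its left end and one from its right end; your rule of charging to ``the later-arriving owner of the blocking pair'' does not obviously have this left/right structure and would need to be reworked accordingly. (Your final remark about verifying the matching lower bound of $3$ is not part of this theorem; it is a separate lemma in the paper.)
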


\begin{proof}
Let $S$ and $U$ denote the sets of paths that have been satisfied and
unsatisfied by the algorithm, respectively.
We prove the theorem by showing that $|U|\leq 2|S|$.
Then, the competitive ratio of Algorithm~\ref{online-feasible} is $\frac{|\pp|}{|S|}=\frac{|U|+|S|}{|S|} \leq \frac{2|S|+|S|}{|S|} = 3$, i.e., Algorithm~\ref{online-feasible} is $3$-competitive.
In the sequel we prove that $|U|\leq 2|S|$ by associating with every path in $U$ some paths of $S$,
and showing that each path in $S$ is associated with at most two paths in $U$.

Note also that, since the instance is assumed to be feasible,
for every edge $uv$ there exist \emph{at most} two paths $P_i,P_j$,
such that $uv\in P_i$ and $uv\in P_j$
(indeed, otherwise there would exist at least one path
that is unsatisfied on the edge $uv$).
Suppose that a path $P_i$ presented at iteration $i$ is unsatisfied, i.e., when $P_i$ arrives, it cannot be satisfied by placing new regenerators.
Then, there exists an edge $ab\in P_i$,
where both $a$ and $b$ already have regenerators of paths that have been previously satisfied by the algorithm.
We distinguish now two cases regarding the regenerators on vertices $a$ and $b$.

\vspace{3mm}\noindent{\bf Case 1:}
The regenerators on vertices $a$ and $b$ belong to two different paths $P_j$
and $P_h$ which have been satisfied previously by the algorithm,
i.e., $\reg(a,P_j)=\reg(b,P_h)=1$,
with $j,h<i$ and $j\neq h$.
% where the paths $P_j,P_h$ have been satisfied previously by the algorithm.

We first consider the cases where $ab \in P_j$ or $ab \in P_h$.
Suppose that $ab\in P_j$. Then, since also $ab\in P_i$ by assumption, it follows that $ab\notin P_h$, since the instance is feasible. That is, $b$ is an endpoint of $P_h$. In this case, associate the unsatisfied path $P_i$ to the satisfied path $P_h$.
Suppose now that $ab\in P_h$.
Then it follows similarly that $ab\notin P_j$,
and thus $a$ is an endpoint of $P_j$.
In this case, associate the unsatisfied path $P_i$ to the satisfied path $P_j$.

Suppose now that $ab\notin P_j$ and $ab\notin P_h$, i.e., $a$ is an endpoint of $P_j$ and $b$ is an endpoint of $P_h$. If there exists another path $P_{\ell}$ that is left unsatisfied by the algorithm, such that $ab\in P_{\ell}$, then associate the unsatisfied paths $\{P_i,P_{\ell}\}$ to the satisfied paths $\{P_j,P_h\}$. Otherwise, if no such path $P_{\ell}$ exists, then associate the path $P_i$ to either $P_j$ or $P_h$.

\noindent{\bf Case 2:}
The regenerators on $a$ and $b$ belong to the same path $P_j$
which has been satisfied previously by the algorithm,
i.e., $\reg(a,P_j)=\reg(b,P_j)=1$, where $j<i$.
% and the path $P_j$ has been satisfied previously by the algorithm.

The edge $ab\in P_j$. Furthermore, neither $a$ nor $b$ is an endpoint of path $P_j$, since otherwise Algorithm~\ref{online-feasible} would not place a regenerator on both vertices $a$ and $b$ of path $P_j$. That is, there exist two vertices $d,c$ of $P_j$, such that $(d,a,b,c)$ is a subpath of $P_j$. Moreover, since $a$ and $b$ are consecutive vertices of $P_j$, according to the algorithm there must exist two other satisfied paths $P_h$, $P_{\ell}$, such that $\reg(d,P_h)=\reg(c,P_{\ell})=1$.\footnote{%
Here we simplify the discussion slightly by assuming that
the path $P_i$ does not contain a chain of two internal edges that both
do not belong to any other paths because the algorithm can simply
assign regenerators to alternate internal nodes without conflicting
any other paths and this would not affect the number of paths that can
be satisfied by the algorithm.
}
Note also that $ab\notin P_{h}$ and $ab\notin P_{\ell}$, since the instance is feasible, and since $ab\in P_{i}$
and $ab\in P_{j}$. That is, $d$ or $a$ is an endpoint of $P_h$, while $b$ or $c$ is an endpoint of $P_{\ell}$.

We claim  that there exist \emph{at most} two different unsatisfied paths
$P_i$ and $P_{i'}$ that include at least one of the edges $da,ab,bc$. Suppose otherwise that there exist three such unsatisfied paths $P_i$, $P_{i'}$, $P_{i''}$. Recall that $ab\in P_i$ and that $da,ab,bc\in P_j$. Therefore, since the instance is assumed to be feasible, it follows that, either $da\in P_{i'}$ and $bc\in P_{i''}$, or $bc\in P_{i'}$ and $da\in P_{i''}$. Since these cases are symmetric, we assume without loss of generality that $da\in P_{i'}$ and $bc\in P_{i''}$. In any optimal (i.e., offline) solution, at least one of $\{a,b\}$ has a regenerator for path $P_j$; assume without loss of generality that $reg(b,P_j)=1$ (the other case $reg(a,P_j)=1$ is symmetric). Then, it follows that $reg(a,P_i)=1$. Then, since the edge $da$ must be satisfied for both paths $P_j$ and $P_{i'}$, it follows that $reg(d,P_j)=reg(d,P_{i'})=1$. This is a contradiction, since every vertex can have at most one regenerator. Therefore there exist at most two different unsatisfied paths $P_i$, $P_{i'}$ that include at least one of the edges $da,ab,bc$.

In the case that $P_i$ is the only unsatisfied path that includes at least one of the edges $da,ab,bc$, associate the unsatisfied path $P_i$ to either the satisfied path $P_h$ or to the satisfied path $P_{\ell}$. Otherwise, if there exist two different unsatisfied paths $P_i$, $P_{i'}$ that include at least one of the edges $da,ab,bc$, associate the unsatisfied paths $\{P_i,P_{i'}\}$ to the satisfied paths $\{P_h,P_{\ell}\}$.

\vspace{5mm}
We observe that by the above associations of unsatisfied paths to satisfied ones, that at most two unsatisfied paths are associated to every satisfied path $P$ (i.e., at most one to the left side and one to the right side of $P$, respectively).
This gives $|U|\leq 2|S|$ and the theorem follows.
\qed
\end{proof}

\begin{lemma}\label{online-feasible-lower-bound-3-lem}
Any deterministic online algorithm for $\prbmax$ has a competitive ratio at least $3$ even when the instance is restricted to feasible ones on path topologies and $k = 1, d = 2$.
\end{lemma}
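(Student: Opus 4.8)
The plan is to exhibit, against an arbitrary deterministic online algorithm $\alg$ and for every $\varepsilon>0$, a \emph{feasible} instance on a path on which $\alg$ satisfies at most a $(1/3+\varepsilon)$-fraction of the presented paths while the offline optimum satisfies all of them; since the competitive ratio is a supremum, this yields the bound $3$. I will work in the vertex-cover reformulation used in the proof of Theorem~\ref{online-feasible-comp-ratio-3-lem}: after deleting the two extreme edges of each path, a path is $2$-satisfied exactly when the chosen regenerator vertices cover its (internal) edges, and $k=1$ means each vertex may serve at most one path. The instance consists of a long chain of heavily overlapping ``long'' paths $A_1,\dots,A_N$, together with ``short'' paths each equal to a single internal edge. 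The adversary releases the $A_i$'s in order so that $\alg$ is coerced, on each $A_i$, into placing regenerators at the two extreme internal vertices of $A_i$: when $A_i$ arrives, its second vertex is already occupied by a regenerator $\alg$ was forced to place for $A_{i-1}$, hence $\alg$ must cover the first edge of $A_i$ at its endpoint, and symmetrically at the right end; a short segment handled by hand starts the cascade. If $\alg$ ever declines to satisfy some $A_i$ although it could, or covers an $A_i$ with a non-minimal (wasteful) cover, I will argue that the ratio on the prefix released so far is already at least $3-\varepsilon$, so we may assume $\alg$ behaves as intended.

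Once the $A_i$'s are in place, each ``junction'' between two consecutive long paths contains internal edges that are covered by \emph{exactly one} long path (for instance the edge immediately left of where $A_i$ begins, which is internal to $A_{i-1}$ only, and the edge immediately right of where $A_{i-1}$ ends, which is internal to $A_i$ only) and whose two endpoints are already both occupied by regenerators $\alg$ placed for $A_{i-1}$ and $A_i$. The adversary then releases one short path equal to each such edge; $\alg$ cannot $2$-satisfy it, since both its vertices are taken and $k=1$, so it is lost. Because the long paths are shared two-per-junction along the chain, the number of long paths is essentially the number of junctions, and the released short paths outnumber them by a factor of two, so the fraction (total satisfiable)/(satisfied by $\alg$) tends to $(N+2(N-1))/N\to 3$ as $N\to\infty$.

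The step I expect to be the real obstacle is proving that the whole instance is feasible, i.e.\ that a single offline assignment $2$-satisfies \emph{all} the long and short paths under $k=1$: the short paths cannot be served at arbitrary endpoints, since each junction edge is shared by one long path and one short path, and the decision of which endpoint serves which must be propagated consistently all along every $A_i$ and across every junction so that no vertex is asked to serve two paths. The plan is to write down this offline solution explicitly, using for each $A_i$ the ``shift by one'' of the cover $\alg$ is forced into — which frees precisely the endpoint vertices the adjacent short paths need — and then to check, edge by edge along the chain, that the shifts at successive junctions are compatible and that no internal edge of any $A_i$ ends up required by three paths at once (this last being exactly the obstruction that makes naive versions of the construction infeasible, and which constrains the admissible overlap lengths). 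The remaining work is routine: the quantitative bookkeeping relating $N$, the number of short paths, and $\varepsilon$, and the case analysis for an algorithm that refuses a satisfiable path or covers one wastefully. Since Theorem~\ref{online-feasible-comp-ratio-3-lem} gives the matching upper bound of $3$, this shows the bound is tight.
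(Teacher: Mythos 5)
Your construction differs from the paper's: you chain many heavily overlapping long paths $A_1,\dots,A_N$ and try to \emph{force} the algorithm's regenerator positions by a cascade, whereas the paper uses a single long path $P_0$ and then a purely local case analysis inside each of $n$ disjoint $11$-edge windows (either the algorithm's cover of $P_0$ contains two disjoint fully-covered edges there, which is punished immediately with two one-edge paths, or it contains five consecutive vertices in the pattern $1,0,1,0,1$, in which case a three-vertex path $P_i'$ forces regenerators onto the two free vertices and two further one-edge paths $P_i'',P_i'''$ are then unsatisfiable). The difference matters, because two essential steps of your plan are not actually carried out and, as stated, do not go through.

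First, the coercion is not sound for an arbitrary deterministic algorithm. When $A_i$ arrives, only its \emph{left} end can be constrained by regenerators already placed for $A_{i-1}$; its right end is completely unconstrained (nothing has been presented there yet), so the algorithm is free to choose the parity of its cover on the right portion of $A_i$, and it may also insert two consecutive regenerators in the middle to flip parity. Your fallback --- ``if $\alg$ is wasteful the ratio on the prefix is already $3-\varepsilon$'' --- is asserted but not proved, and it is false in general: one non-minimal placement in the middle of a long prefix containing many already-satisfied paths does not push the ratio of that prefix to $3$. The paper needs no coercion at all: it reasons about \emph{whatever} cover the algorithm chose for $P_0$, and the pigeonhole-style argument inside each $11$-edge window guarantees that one of the two punishing gadgets applies. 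Second, you explicitly defer the feasibility proof, and for your instance it is genuinely in doubt: on every edge shared by $A_{i-1}$ and $A_i$ the offline solution must place one regenerator per path at the two endpoints, so the two long paths' regenerators must strictly alternate throughout each overlap; this rigidity propagates along the whole chain and interacts with the short paths at the junctions, and without fixing the overlap lengths and parities there is no guarantee a consistent global assignment exists. (Note also that an edge ``immediately left of where $A_i$ begins'' has $A_i$'s endpoint as one of its vertices, where no regenerator for $A_i$ may sit, so the claim that both its endpoints are occupied by the online algorithm needs $A_{i-1}$ to have placed two consecutive regenerators --- contradicting the minimal-cover behaviour you are simultaneously assuming.) In the paper's instance feasibility is immediate because the gadgets $P_i',P_i'',P_i'''$ in different windows are vertex-disjoint and an explicit assignment ($v_1^i,v_3^i,v_5^i$ to the short paths, everything else to $P_0$) is written down. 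As it stands, your proposal identifies the right target ratio and the right obstruction (occupied endpoints of single-edge paths), but the two load-bearing steps --- forcing the online algorithm and certifying feasibility --- are missing.
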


\begin{proof}
We will prove that, for every $\varepsilon > 0$, there exist infinitely many inputs such that every algorithm has competitive ratio at least $3-\varepsilon$. Choose an integer $n$, such that $\frac{2}{n+1}<\varepsilon$. The adversary provides initially a path $P_0$ with $13n - 2$ edges. The algorithm must satisfy the path $P_0$, since otherwise the adversary stops and the competitive ratio is infinite. We divide $P_0$ into $n$ subpaths $P_{i}$, $i=1,2,\ldots,n$, with $11$ edges each, where between two consecutive subpaths there exist two edges.

Consider any such subpath $P_{i}$, $i=1,2,\ldots,n$. Suppose that there exist two edges $ab$ and $cd$ of $P_{i}$, where $\{a,b\}\cap\{c,d\}=\emptyset$, such that $\reg(a,P_0)=\reg(b,P_0)=\reg(c,P_0)=\reg(d,P_0)=1$. Then the adversary provides next the paths $P_{i,1}=(a,b)$ and $P_{i,2}=(c,d)$. These two paths $P_{i,1}$ and $P_{i,2}$ cannot be satisfied, since each of the vertices $a,b,c,d$ has a regenerator for path $P_0$. So the competitive ratio of the algorithm is at least $3$.

We thus can assume that there do not exist such edges $ab$ and $cd$ for any of the $P_{i}$'s. That is, there exist \emph{at most} three consecutive vertices $u_1,u_2,u_3$ of $P_i$, such that $\reg(u_1,P_0)=\reg(u_2,P_0)=\reg(u_3,P_0)=1$, while for every other edge $uu'$ of $P_i$, there exists a regenerator for $P_0$ either on vertex $u$ or on vertex $u'$. Then, there exist five consecutive vertices $v_1^i,v_2^i,v_3^i,v_4^i,v_5^i$ of $P_i$, such that $\reg(v_1^i,P_0)=\reg(v_3^i,P_0)=\reg(v_5^i,P_0)=1$ and $\reg(v_2^i,P_0)=\reg(v_4^i,P_0)=0$.

The adversary now provides the path $P_{i}^{\prime}=(v_2^i,v_3^i,v_4^i)$. Thus, since $\reg(v_3^i,P_0)=1$ and $\reg(v_2^i,P_0)=\reg(v_4^i,P_0)=0$, the only way that the algorithm can satisfy $P_{i}^{\prime}$ is to place regenerators for $P_{i}^{\prime}$ at the vertices $v_2^i$ and $v_4^i$ (that is, $\reg(v_2^i,P_{i}^{\prime})=\reg(v_4^i,P_{i}^{\prime})=1$).

The adversary proceeds as follows. In the case where the algorithm chooses not to satisfy the path $P_{i}^{\prime}$, the adversary does not provide any other path that shares edges with $P_{i}$. Otherwise, if the algorithm satisfies $P_{i}^{\prime}$, then the adversary provides the paths $P_{i}^{\prime\prime}=(v_1^i,v_2^i)$ and $P_{i}^{\prime\prime\prime}=(v_4^i,v_5^i)$ (see Figure~\ref{fig:lowerbound}). In this case, $\reg(v_2^i,P_{i}^{\prime})=\reg(v_4^i,P_{i}^{\prime})=1$ and $\reg(v_1^i,P_0)=\reg(v_5^i,P_0)=1$, and thus the paths $P_{i}^{\prime\prime}$ and $P_{i}^{\prime\prime\prime}$ remain unsatisfied by the algorithm.

We now show that this instance is indeed feasible. Actually, we show that even the instance that includes $P_0$, all the  paths $P_{i}^{\prime}=(v_2^i,v_3^i,v_4^i)$, all the paths $P_{i}^{\prime\prime}=(v_1^i,v_2^i)$ and all the paths $P_{i}^{\prime\prime\prime}=(v_4^i,v_5^i)$ is feasible.
To see this, we put regenerators at the nodes $v_1^i,v_3^i,v_5^i$, that will satisfy $P_{i}^{\prime\prime}=(v_1^i,v_2^i)$, $P_{i}^{\prime}=(v_2^i,v_3^i,v_4^i)$ $P_{i}^{\prime\prime\prime}=(v_4^i,v_5^i)$, respectively. We then put regenerators at all other nodes (including the nodes $v_2^i,v_4^i$), which clearly satisfies $P_0$.

\begin{figure}[h]
\begin{center}
\scalebox{1.0}{\input{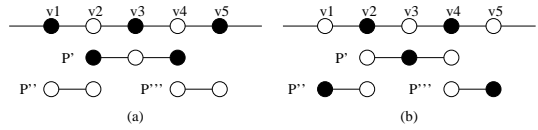}}
\caption{Adversary for Lemma~\ref{online-feasible-lower-bound-3-lem}.
 (a) The online assignment where $P''$ and $P'''$ cannot be satisfied.
 (b) The optimal assignment where all paths are satisfied.}
\label{fig:lowerbound}
\end{center}
\end{figure}

Denote by $h$ the number of subpaths $P_i$, for which the adversary adds the path $P_{i}^{\prime}$, $P_{i}^{\prime\prime}$ and $P_{i}^{\prime\prime\prime}$. Thus the number of subpaths $P_i$, for which the adversary adds $P_{i}^{\prime}$, but not $P_{i}^{\prime\prime}$ or $P_{i}^{\prime\prime\prime}$, is $n-h$. The total number of paths that the adversary provided is thus $1 + 3h + (n-h)~=~1+n+2h$.
The number of paths satisfied by the algorithm is $1 + h$. That is, the competitive ratio of the algorithm is $\frac{1 + n + 2h}{1 + h} ~ =~ 3 + \frac{n - 2 - h}{1 + h}$. Therefore, since $h\leq n$, it follows that the competitive ratio of the algorithm is at least $3 - \frac{2}{1 + n} > 3-\varepsilon$. Since this holds for every $\varepsilon>0$, it follows that any deterministic online algorithm has competitive ratio at least $3$.
This completes the proof of the lemma.
\qed
\end{proof}

%\section{Summary and Future Work}\label{sec:summary}

\section{Future Work}\label{sec:summary}
%In this paper we presented upper and lower bounds for the competitive ratio of online %algorithms for the regenerator location problem. We considered two extreme cases regarding %the value of $d$ and the number $k$ of regenerators that can be used in any single node: the %case where $d$ is arbitrary and $k$ is unbounded on a general topology, and the case where %$d=2$ and $k=1$ on a path topology.

We list some open problems and research directions:
\begin{itemize}
\item Close the gap between the bounds shown in this paper. In particular, we used in Section \ref{sec:kinfty} a known approximation result of set cover and modified it for our problem. It might be of interest to improve the upper bound by developing a better algorithm for these special instances of the set cover problem. However we note that \algsc~does not use the set of all potential elements but only its size. Therefore if the algorithm has a priori information about the total length of the paths to be received, the algorithm can use it to get an upper bound which is logarithmic in terms of this bound, instead of the number of all possible paths of size $d$ which can be much bigger.

\item Extend the results for other values of the parameters $d$ and $k$.

\item Consider the regenerator location problem when also traffic grooming is allowed (that is, when up to $g$ (the {\em grooming factor}) paths that share an edge can be assigned the same wavelength and can then share regenerators).
     In \cite{FMMSZ11b} optimizing the use of regenerators in the presence of traffic grooming is studied, but with two fundamental differences: (1) the cost function in this work is the number of regenerator locations rather than the total number of regenerators suggested in that work, and (2) we consider the online case, where the requests for connection are not known a priori, whereas that work considers the offline case where all requests are given in advance.
    \item Consider other objective functions (some of them are discussed in Section~\ref{sec:intro}).
\end{itemize}

\bibliographystyle{abbrv}
\bibliography{Optical,Approximation,Mordo,Miscelaneous}

\end{document}